     \def\section{\@startsection{section}{1}%
     \z@{.7\linespacing\@plus\linespacing}{.5\linespacing}%
     {\bfseries
     \centering
     }}
     \def\@secnumfont{\bfseries}
\newtheorem{theorem}{Theorem}[section]
\DeclareMathOperator\supp{supp}
\newtheorem{lemma}[theorem]{Lemma}
\theoremstyle{definition}
\theoremstyle{remark}
\newtheorem{remark}[theorem]{Remark}
\numberwithin{equation}{section}
\newcommand\@received{Received 2024-5-6;
  Accepted 2024-6-2; Communicated by the editors.}
\renewcommand\@adminfootnotes
\let \@makefnmark \relax \let \@thefnmark \relax
\else \@footnotetext {\@setdate }\fi
\else \@footnotetext {\@setsubjclass }\fi
\else \@footnotetext {\@setkeywords }\fi
\def \par {\let \par \@par }\@setthanks }\fi}
\begin{document}

\title[Holomorphic Functional Calculus approach]{Holomorphic Functional Calculus approach\\
to the  Characteristic Function of\\ Quantum Observables}
\author[Andreas Boukas]{Andreas Boukas*}
\thanks{* Corresponding author}
\address{Andreas Boukas: Centro Vito Volterra, Universit\`{a} di Roma Tor Vergata, via Columbia  2, 00133 Roma,
Italy} \email{andreasboukas@yahoo.com}

\subjclass[2020]{Primary 47B25, 47B15, 47B40, 47A10, 81Q10, 81Q12;
Secondary  47B47, 81S05}

\keywords{Quantum observable, quantum random variable, vacuum
characteristic function, quantum Fourier transform, Dunford's
holomorphic functional calculus, Cauchy's integral formula}

\begin{abstract}
We show how Cauchy's  Integral Formula and the ideas of Dunford's
Holomorphic Functional Calculus (for unbounded operators) can be
used  to compute  the Vacuum Characteristic Function (Quantum
Fourier Transform) of quantum random variables defined as
self-adjoint operators on $L^2(\mathbb{R},\mathbb{C})$. We
consider in detail several quantum observables defined in terms of
the position and momentum operators   $X$, $P$, respectively, on
$L^2(\mathbb{R},\mathbb{C})$.
\end{abstract}

\maketitle

\section{Introduction}\label{intro}

Identifying the statistical properties of a \textit{quantum random
variable}, typically by computing its \textit{vacuum
characteristic function}, is of primary interest in
\textit{quantum probability} and \textit{quantum stochastic
analysis}. This is usually done through Lie algebraic means by
expressing the observable as a linear combination of the
generators of a naturally associated Lie algebra. This approach
has been studied extensively, for example in \cite{AccBouCOSA}. In
\cite{BFspec2} and \cite{BFspec} we showed how it can be done
through analytic methods, with the use of \textit{von Neumann's
spectral theorem} for unbounded self-adjoint operators on  Hilbert
spaces and also with the use of \textit{Stone's formula} for the
spectral resolution of an unbounded self-adjoint operator on a
Hilbert space. In this paper, which can be viewed as Part III of
\cite{BFspec2} and \cite{BFspec}, we show how to compute the
vacuum characteristic function of a quantum observable with the
use of \textit{holomorphic functional calculus} and
\textit{Cauchy's integral formula} on a suitable contour.

 As in \cite{BFspec2} and  \cite{BFspec}, we consider the self-adjoint (see \cite{QMI}, Section 2.3) \textit{position, momentum} and
 \textit{identity} operators, $X, P$ and $\mathbf{1}$, respectively, defined in $L^2(\mathbb{R},\mathbb{C})$ with inner
product $\langle \cdot, \cdot \rangle$ and norm $\|\cdot\|$,
respectively,
\begin{equation}
\langle f, g \rangle
=\frac{1}{\sqrt{\hslash}}\,\int_{\mathbb{R}}\,\overline{f(x)}g(x)\,dx
\ \,,\,\,\|f\|=\left(
\frac{1}{\sqrt{\hslash}}\,\int_{\mathbb{R}}\,|f(x)|^2\,dx
\right)^{1/2} \  ,
\end{equation}
 by
\begin{equation}
X\,f(x)=x\,f(x)\,\,;\,\,
P\,f(x)=-i\,\hslash\,f^{\prime}(x)\,\,;\,\, \mathbf{1}\,f(x)=f(x)\
,
\end{equation}
and satisfying the commutation relations
\begin{equation}
\lbrack P, X \rbrack=-i\hslash\mathbf{1}\  ,
\end{equation}
on
\begin{equation}
\Omega={\rm dom }(X)\cap {\rm dom }(P)\  ,
\end{equation}
 where,
\begin{equation}
{\rm dom }(X)=\{f\in L^2(\mathbb{R},\mathbb{C}))
\,:\,\int_{\mathbb{R}}x^2\,|f(x)|^2\,dx<\infty\}\  ,
\end{equation}
and
\begin{equation}
{\rm dom }(P)=\{f\in L^2(\mathbb{R},\mathbb{C}) \,:\,\mbox{$f$ is
abs. cont. and }\int_{\mathbb{R}}\,\left|f^\prime(x)\right|^2\,dx<
\infty\}\  ,
\end{equation}
are respectively the, dense in $L^2(\mathbb{R},\mathbb{C})$,
domains of $X$ and $P$ .  Here,``abs. cont" stands for
\textit{absolutely continuous} and the existence of $f^{\prime}$
is only required \textit{almost everywhere}.  To show that $f$ is
absolutely continuous it suffices to show that $f^{\prime}$
\textit{is in $L^2(\mathbb{R},\mathbb{C})$ as a distribution} (see
\cite{Richtmyer}, p.88 and p.132), meaning that there exists $M>0$
such that, for all $\phi \in C_0^{\infty}(\mathbb{R})$,
\begin{equation}\label{acd}
|\langle f^{\prime}, \phi \rangle|
=\left|\int_{\mathbb{R}}f^{\prime}(s) \phi(s)\,ds\right|
 \leq
 M\, \|\phi \|   \ .
\end{equation}
 Functions in the domain of $P$
are continuous and vanish at infinity. Since it contains
$C_0^{\infty}(\mathbb{R})$ (the set of \textit{bump} functions),
$\Omega$ is nonempty and dense in $L^2(\mathbb{R},\mathbb{C})$.
Moreover, functions in $\Omega$ vanish at infinity. The subset of
$C_{0}^{\infty}(\mathbb{R})$ consisting of bump functions
vanishing in a neighborhood of zero is also dense in
$L^2(\mathbb{R},\mathbb{C})$ (see Lemma \ref{den} for a proof).

The \textit{Schwartz class} $\mathcal{S}$ is a common invariant
domain of $X$ and $P$ which is also dense in
$L^2(\mathbb{R},\mathbb{C})$ and contains
$C_0^{\infty}(\mathbb{R})$. Whenever sums or products of $X$ and
$P$ are considered, $\mathcal{S}$ will be taken as their domain.
Since $\mathcal{S}$ is  viewed as a subset of
$L^2(\mathbb{R},\mathbb{C})$, we can allow in $\mathcal{S}$
functions with a finite number of finite-jump discontinuities.
This is used, in particular, in Theorem \ref{csp}. The Schwartz
class and its properties, as related to Quantum Mechanics, are
presented in \cite{BeSen}. As in \cite{BFspec2} and \cite{BFspec},
we normalize to $\hslash=1$.

The function
\begin{equation}\label{phi}
\Phi=\Phi(x)=
\pi^{-1/4}\,e^{-\frac{x^2}{2\hslash}}=\pi^{-1/4}\,e^{-\frac{x^2}{2}}\
,
\end{equation}
is a unit vector in $\mathcal{S}$ . For $z\in\mathbb{C}$ we denote
\begin{equation}
R(z;T)=(z-T)^{-1} \ ,
\end{equation}
 the \textit{resolvent} of $T$. The
\textit{vacuum characteristic function} of $T$ is
\begin{equation}
 \langle  e^{itT} \rangle=  \langle \Phi, e^{itT}\Phi \rangle\ .
\end{equation}

For $a, b\in\mathbb{R}$ we denote by $\delta_{a,b}$,  $\delta_a$
and $H_a$ , respectively, the \textit{Kronecker delta},
\textit{Dirac delta} and \textit{Heaviside unit step} functions
defined, for a \textit{test function $f$}, by
\begin{equation}
\delta_{a,b}=\left\{
\begin{array}{llr}
 1\   ,&   a=b  \\
& \\
0\   ,& a\neq b
 \end{array}
 \right.  \  ,
\end{equation}
\begin{equation}
\int_{\mathbb{R}} f(x)\delta_a (x)\,dx =\int_{\mathbb{R}}
f(x)\delta (x-a)\,dx=f(a) \ ,
\end{equation}
 and
\begin{equation}
H_a(x)=H(x-a)=\left\{
\begin{array}{llr}
 1\   ,&  x \geq a  \\
& \\
0\   ,& x < a
 \end{array}
 \right.  \  .
\end{equation}
For a complex number $z$ we denote ${\rm Re }z$, ${\rm Im }z$, its
real and imaginary part, respectively, and we denote the
\textit{support} of a function $g$ by $\supp(g)$.

In  Lemma 12A, p.43 of \cite{goldberg}, it is shown that for any
real numbers $\epsilon >0$ and $\alpha$ we have
\begin{equation}\label{i1}
\int_{-\infty}^\infty e^{i \alpha t} e^{- \epsilon t^2}\,
dt=\left( \frac{\pi}{\epsilon}
\right)^{\frac{1}{2}}e^{-\frac{\alpha^2}{4\epsilon}} \ .
\end{equation}
Formula (\ref{i1}) is also valid for $\epsilon\in\mathbb{C}$ with
positive real part and it is used throughout this paper.

As in \cite{BFspec}, we define the Fourier transform of $f$ by
\begin{equation}\label{FT}
(Uf)(t)=\hat{f}(t)=(2 \pi)^{-1/2}\,\int_{-\infty}^\infty
\,e^{i\lambda t}f(\lambda)d\lambda\  ,
\end{equation}
and the inverse Fourier transform of $\hat{f}$ by
\begin{equation}\label{IFT}
(U^{-1}\hat{f})(\lambda)=f(\lambda)=(2 \pi)^{-1/2}\,\int_{-\infty}^\infty \,e^{-i\lambda
t}\hat{f}(t)dt\ .
\end{equation}
 The definition of the Fourier transform in (\ref{FT}) is the one used in p.315 of \cite{yosida} (and also in \cite{goldberg}) and it is compatible with the definition of the
characteristic function of a random variable with probability
density function $f$ (\cite{lucacs}, p.10,  formula (1.3.6)).

The operators $X$ and $P$ are \textit{unitarily equivalent}
through the Fourier transform (see \cite{yosida}), i.e.,
\begin{equation}\label{UE}
Xf=U^{-1}PUf \,\,,\,\,Pf=UXU^{-1}f  \,,\, f\in\mathcal{S} \ .
\end{equation}
Equation (\ref{UE}) is generally true if $f$ is such that
\begin{equation}\label{UEC}
f \in \Omega \,,\,Uf \in {\rm dom }(P)\,,\, U^{-1}f \in {\rm dom
}(X) \ .
\end{equation}

 Interpreting \textit{Cauchy's integral formula} for $e^{itT}$,
as in \textit{Dunford's Holomorphic Functional Calculus} for
unbounded operators (see \cite{DS1} p.599) in the weak sense, we
may define the vacuum characteristic function of a self-adjoint
operator $T$ by
\begin{equation}\label{tw}
\langle  e^{itT} \rangle=\frac{1}{2\pi i}\oint_C e^{itz} \langle
\Phi, R(z;T)\Phi \rangle \,dz\  ,
\end{equation}
where the closed contour $C$ is, in principle, large enough to
contain the singularities of
\begin{equation}
F(z)=e^{itz} \langle \Phi, R(z;T)\Phi \rangle= e^{itz}
\int_{-\infty}^{\infty}
 \Phi(s) R(z;T)\Phi(s) \,ds  \ .
\end{equation}
 In this paper we will consider
\begin{equation}
T\in\left\{X, P, X+P, XP+PX,
\frac{1}{2}\left(X^2+P^2\right)\right\}\  ,
\end{equation}
so $T$, being (at least) symmetric and densely defined, will be
\textit{closable} \cite{Konrad}. However, the function $f(z)=
e^{itz} $ is not \textit{holomorphic at infinity} (meaning,
$f(1/z)=e^{it/z}$ is not holomorphic at $0$), so Dunford's formula
\begin{equation}
 f(T):=\frac{1}{2\pi i}\oint_C f(z) R(z;T)\,dz\  ,
\end{equation}
cannot be  directly applied to define $e^{itT}$, and we will use
its weak version, in the vacuum state $\Phi$,  instead. However,
even in simple cases, the spectrum of $T$ covers the entire real
line. We will show how the contour integral in (\ref{tw}) should
be interpreted.

Passage of the limit under the integral sign is typically
justified by appealing to \textit{ Lebesgue's bounded convergence
theorem} as, for example in Theorem 9.1 of \cite{BFspec}.

Though, the main theme of this paper is the computation of the
vacuum characteristic function of $X, P, X+P, XP+PX$ and
$\frac{1}{2}(X^2+P^2)$, their relevant spectral properties are
studied in detail. These operators are well-studied in quantum
mechanics. When something is mentioned in the literature, a
reference is given. The complete proofs are, however, included for
completeness and they were given by the author.

\section{The Vacuum Characteristic Function of $X$  }

\begin{theorem}\label{sX} The operator  $X$ has only continuous spectrum
consisting of the entire
real line. Moreover, for $z\in \mathbb{C}\setminus \mathbb{R}$
and  $g$ in the range of $z-X$, the resolvent operator $R(z; X)$
is defined by
\begin{equation}\label{ffr}
R(z; X)g(s)=\frac{g(s)}{z-s} \  .
\end{equation}
\end{theorem}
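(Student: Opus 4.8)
The plan is to prove two separate claims: the spectral characterization of $X$, and the explicit formula for its resolvent.

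The plan is to treat the two assertions separately: the explicit resolvent formula \eqref{ffr}, which is essentially an algebraic computation combined with a domain check, and the spectral description, which requires showing that every real number lies in the spectrum, that no real number is an eigenvalue, and that the range of $\lambda-X$ is dense for each real $\lambda$.

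First I would establish \eqref{ffr} and, simultaneously, that $\mathbb{C}\setminus\mathbb{R}$ lies in the resolvent set. Fix $z$ with $\mathrm{Im}\,z\neq 0$. Given $g\in L^2(\mathbb{R},\mathbb{C})$, the equation $(z-X)f=g$ reads $(z-s)f(s)=g(s)$; since $|z-s|\geq|\mathrm{Im}\,z|>0$ for every real $s$, one may divide to obtain the candidate $f(s)=g(s)/(z-s)$. The bound $|f(s)|\leq|g(s)|/|\mathrm{Im}\,z|$ shows $f\in L^2$, and since $s\mapsto s/(z-s)$ is bounded on $\mathbb{R}$ (it is continuous with a finite limit at $\pm\infty$ and a non-vanishing denominator), $sf\in L^2$ as well, so $f\in{\rm dom }(X)$. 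Hence $z-X$ maps ${\rm dom }(X)$ onto all of $L^2$, is injective by the same division, and its inverse is given by \eqref{ffr}.

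For the spectral statement I would argue in three steps. Since $X$ is self-adjoint its spectrum is real, which already follows from the previous paragraph. No real $\lambda$ is an eigenvalue: $Xf=\lambda f$ forces $(s-\lambda)f(s)=0$ almost everywhere, hence $f=0$ in $L^2$ because $\{\lambda\}$ is a null set, so the point spectrum is empty. To see that every real $\lambda$ lies in the spectrum I would produce a Weyl sequence, namely the normalized indicator functions $f_n$ of the intervals $[\lambda-\tfrac1n,\lambda+\tfrac1n]$; these lie in ${\rm dom }(X)$, have unit norm, and satisfy $\|(\lambda-X)f_n\|\leq\tfrac1n\to 0$, so $\lambda-X$ is not bounded below and $\lambda\in\sigma(X)$. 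Finally, to identify this as continuous spectrum I would check that $\lambda-X$ has dense range: for $g\in L^2$ the truncations $g\,\mathbf 1_{\{|s-\lambda|>1/n\}}$ converge to $g$ in $L^2$ and each lies in the range, since $g(s)/(\lambda-s)$ restricted to $\{|s-\lambda|>1/n\}$ belongs to ${\rm dom }(X)$. Injectivity together with dense but proper range is exactly the definition of continuous spectrum.

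I expect the main obstacle to be nothing deep but rather the bookkeeping of the domain conditions, in particular verifying the second-moment condition $\int s^2|f(s)|^2\,ds<\infty$ each time I assert $f\in{\rm dom }(X)$, and making the dichotomy precise, namely that the range is dense yet not all of $L^2$. The latter is what rules out $\lambda$ being in the resolvent set or the residual spectrum and pins it in the continuous spectrum, and it follows because a genuinely boundedly invertible closed operator could not have $\lambda$ in its spectrum.
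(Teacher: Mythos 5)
Your proposal is correct, and its skeleton (no point spectrum via $(s-z)G=0$, dense range via the truncations $g\,\chi_{\{|s-\lambda|>1/n\}}$, the algebraic derivation of \eqref{ffr}) coincides with the paper's. There are three genuine differences worth noting. First, to place each real $\lambda$ in the spectrum the paper exhibits functions $g_n=\chi_{[\,\lambda-1,\lambda-\frac1n]}$ in the range of $\lambda-X$ with $\|R(\lambda;X)g_n\|^2=n-1\to\infty$ while $\|g_n\|\to 1$, i.e.\ it shows the inverse is unbounded; you instead use the Weyl sequence of normalized indicators of $[\lambda-\frac1n,\lambda+\frac1n]$, showing $\lambda-X$ is not bounded below. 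These are the same fact viewed from opposite sides, and your version has the minor advantage of not needing to know in advance which functions lie in the range of $\lambda-X$. Second, the paper classifies $\lambda$ as continuous spectrum by citing the theorem that a self-adjoint operator has empty residual spectrum (its dense-range computation is offered only as an alternative), whereas you make the dense-range argument the primary one and so avoid that citation; either is fine, but you should still say explicitly why the range is \emph{proper} (your closing remark — a closed bijection would have a bounded inverse by the closed graph theorem, contradicting the Weyl sequence — does supply this, so keep it in the final write-up). Third, you verify that every nonreal $z$ actually lies in the resolvent set, checking both $f\in L^2$ via $|z-s|\ge|\mathrm{Im}\,z|$ and $sf\in L^2$ via the boundedness of $s/(z-s)$; the paper's proof of \eqref{ffr} is purely the formal equivalence $R(z;X)g=G\iff G=g/(z-s)$ for $g$ already in the range, and does not address surjectivity or the domain condition. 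Your extra care here is a genuine strengthening rather than a redundancy, since it is what justifies calling $R(z;X)$ a (bounded, everywhere-defined) resolvent for nonreal $z$.
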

\begin{proof}Since $X$ is self-adjoint, its spectrum is real. If $z\in\mathbb{R} $
is an eigenvalue of $X$ then a corresponding
eigenfunction $G$ would satisfy
\begin{equation}
(s-z)G(s)=0 \,,\, s\in\mathbb{R}\  ,
\end{equation}
which would imply that $G=0$ almost everywhere. Thus $X$ has no
point spectrum.

\medskip

To see that, for  $z\in\mathbb{R}$, the resolvent operator
$R(z;X)$ is not bounded, we notice that, for $n\in\{1, 2,...\}$,
the function (see \cite{gdansk})
\begin{equation}
G_n(s)=\frac{1}{z-s}\chi_{{}_{\left\lbrack z-1, z-\frac{1}{n}
\right\rbrack}}(s)=\left\{
\begin{array}{llr}
\frac{1}{z-s}  , &\mbox{ if  }  z-1\leq s \leq z-\frac{1}{n}  \\
& \\
 0 ,&\mbox{ otherwise  }
\end{array}
\right. \ ,
\end{equation}
is in the domain of $X$ and the function
\begin{equation}
g_n(s)=(z-s)G_n(s)=\chi_{{}_{\left\lbrack z-1, z-\frac{1}{n}
\right\rbrack}}(s)=\left\{
\begin{array}{llr}
1  , &\mbox{ if  }  z-1\leq s \leq z-\frac{1}{n}  \\
& \\
 0 ,&\mbox{ otherwise  }
\end{array}
\right.  \ ,
\end{equation}
is, by construction, in the range of $z-X$ (therefore in the
domain of $R(z, X)$) with corresponding $L^2$-norms
\begin{equation}
\|R(z;X)g_n\|^2=\|G_n\|^2=n-1 \to \infty , \,\,  n \to \infty \ ,
\end{equation}
and
\begin{equation}
\|g_n\|^2= 1-\frac{1}{n}\to 1, \,\, n \to \infty \  .
\end{equation}
Thus, there can be no non-negative real number $M$ such that
\begin{equation}
\|R(z;X)g_n\|\leq M \|g_n\|, \,\, n\in\{1, 2,...\} \ .
\end{equation}
Since $X$ is self-adjoint, its residual spectrum is empty
(\cite{Richtmyer}, Theorem 1). Therefore, every $z\in\mathbb{R}$
is in the continuous spectrum of $X$.

Alternatively, to show that the range of $z-X$ is dense in $L^2(\mathbb{R},\mathbb{C})$,
we notice that for $f\in  L^2(\mathbb{R},\mathbb{C})$, the sequence $(f_n)$ defined by
(see \cite{gdansk})
\begin{equation}\label{deffn}
f_n(s) = \frac{f(s)}{z-s}\, \chi_{{}_{ \left(z-\frac{1}{n},
z+\frac{1}{n}\right)^c}}(s)=\left\{
\begin{array}{llr}
\frac{f(s)}{z-s}  , &\mbox{ if  }   s \notin \left(z-\frac{1}{n}, z+\frac{1}{n}\right)  \\
& \\
 0 ,&\mbox{ otherwise  }
\end{array}
\right.\  ,
\end{equation}
is in the domain of $X$ and $(z-X)f_n\to f$, in the $L^2$-sense,
as $n\to\infty$.

\medskip

Finally, for (\ref{ffr}), for $z\in \mathbb{C}\setminus
\mathbb{R}$ and $s\in\mathbb{R}$, we have
\begin{equation}
R(z; X)g(s)=G(s) \iff (z-X)G(s)=g(s) \iff G(s)=\frac{g(s)}{z-s} \
.
\end{equation}
\end{proof}

\begin{theorem}\label{X} For $t\in \mathbb{R}$, the vacuum characteristic function of
 $X$  is
\begin{equation}
\langle  e^{i t X}  \rangle= e^{-\frac{t^2}{4}} \ .
\end{equation}
\end{theorem}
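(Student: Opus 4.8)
The plan is to substitute the resolvent found in Theorem \ref{sX} into the weak Cauchy formula (\ref{tw}) and evaluate the resulting contour integral. Using (\ref{ffr}) together with $\Phi(s)=\pi^{-1/4}e^{-s^2/2}$, so that $\Phi(s)^2=\pi^{-1/2}e^{-s^2}$, the resolvent matrix element is the Cauchy (Stieltjes) transform of a Gaussian,
\begin{equation}
\langle \Phi, R(z;X)\Phi\rangle=\int_{-\infty}^{\infty}\frac{\Phi(s)^2}{z-s}\,ds=\pi^{-1/2}\int_{-\infty}^{\infty}\frac{e^{-s^2}}{z-s}\,ds,
\end{equation}
which is holomorphic on $\mathbb{C}\setminus\mathbb{R}$, with one branch in each open half-plane.

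Since Theorem \ref{sX} shows that $X$ has purely continuous spectrum equal to all of $\mathbb{R}$, there are no isolated poles to enclose: the singular set of $F(z)=e^{itz}\langle\Phi,R(z;X)\Phi\rangle$ is the whole real axis, and this is precisely where the ``weak version'' of Dunford's formula must be interpreted. I would therefore take $C$ to be the boundary of a thin horizontal strip about $\mathbb{R}$, namely the two lines $z=x\pm i\eta$ traversed counterclockwise and closed off at $\pm\infty$, and then let $\eta\downarrow 0$. The connecting arcs at infinity contribute nothing (there $F$ decays while their length stays bounded), so the contour integral collapses to the integral of the jump of $F$ across the cut:
\begin{equation}
\langle e^{itX}\rangle=\frac{1}{2\pi i}\int_{-\infty}^{\infty}e^{itx}\bigl[\langle\Phi,R(x-i0;X)\Phi\rangle-\langle\Phi,R(x+i0;X)\Phi\rangle\bigr]\,dx.
\end{equation}

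To make the boundary values rigorous I would work at finite height first: subtracting the two resolvents on the lines $z=x\pm i\eta$ gives
\begin{equation}
\langle\Phi,R(x-i\eta;X)\Phi\rangle-\langle\Phi,R(x+i\eta;X)\Phi\rangle=\pi^{-1/2}\int_{-\infty}^{\infty}e^{-s^2}\,\frac{2i\eta}{(x-s)^2+\eta^2}\,ds,
\end{equation}
the Gaussian $\Phi^2$ smeared against the Poisson kernel. As $\eta\downarrow 0$ the Poisson kernel acts as an approximate identity, so this difference converges to $2i\pi\,\Phi(x)^2=2i\sqrt{\pi}\,e^{-x^2}$ (equivalently, the Sokhotski--Plemelj jump, whose principal-value parts cancel in the difference), and it is dominated uniformly in $\eta\in(0,1]$ by an integrable function of $x$. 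Passing the limit under the integral sign then yields
\begin{equation}
\langle e^{itX}\rangle=\frac{1}{2\pi i}\int_{-\infty}^{\infty}e^{itx}\cdot 2i\sqrt{\pi}\,e^{-x^2}\,dx=\frac{1}{\sqrt{\pi}}\int_{-\infty}^{\infty}e^{itx}\,e^{-x^2}\,dx,
\end{equation}
and an application of (\ref{i1}) with $\epsilon=1$ and $\alpha=t$ gives $\frac{1}{\sqrt{\pi}}\cdot\sqrt{\pi}\,e^{-t^2/4}=e^{-t^2/4}$, as claimed.

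The main obstacle is conceptual rather than computational: giving the contour integral (\ref{tw}) a rigorous meaning when the ``singularities'' fill the entire real line instead of being isolated. The two ingredients that carry the argument are (i) the justification that the $\eta\downarrow 0$ limit may be taken under the integral sign, via Lebesgue's bounded convergence theorem with the Gaussian-type dominating function above, and (ii) the cancellation of the slowly-decaying principal-value parts of the two boundary values, which is what leaves behind the absolutely integrable Gaussian jump (equal to $2\pi i$ times the spectral density of $X$ in the state $\Phi$, in agreement with Stone's formula). Once these are in place, the closed-form evaluation reduces to the single Gaussian integral (\ref{i1}).
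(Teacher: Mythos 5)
Your proposal is correct, but it reaches the result by a genuinely different route from the paper's own proof of Theorem \ref{X}. The paper keeps the full circle: it truncates the $s$-integration to $[-r,r]$, takes the contour $C_{r+\varepsilon}=\{|z|=r+\varepsilon\}$ so that every $s\in[-r,r]$ lies strictly inside, interchanges the order of integration, and applies Cauchy's integral formula pointwise in $s$ to get $\oint_{C_{r+\varepsilon}}\frac{e^{itz}}{z-s}\,dz=2\pi i\,e^{its}$, after which only the Gaussian integral (\ref{i1}) remains. You instead deform to the boundary of a thin strip about $\mathbb{R}$ and compute the Sokhotski--Plemelj jump $\langle\Phi,R(x-i0;X)\Phi\rangle-\langle\Phi,R(x+i0;X)\Phi\rangle=2\pi i\,\Phi(x)^2$ via the Poisson kernel acting as an approximate identity. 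The paper's argument is shorter here because the resolvent kernel $(z-s)^{-1}$ \emph{is} the Cauchy kernel, so the residue at $z=s$ does all the work; the price is an unexamined interchange of $\oint$ and $\int$ and the somewhat artificial coupling of the truncation radius $r$ to the contour radius $r+\varepsilon$. Your jump computation is more robust and more informative: it exhibits the spectral density $\pi^{-1/2}e^{-x^2}$ of $X$ in the state $\Phi$ explicitly (i.e.\ it is Stone's formula in disguise), and it is in fact the same strip-shrinking mechanism the paper itself deploys for $P$, $X+P$ and $XP+PX$ in Theorems \ref{P}, \ref{X+P} and \ref{XP+PX}, so it unifies the treatment of $X$ with the other observables. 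The only point you should make fully explicit is the dominating function for the $\eta\downarrow 0$ limit: the Poisson extension of the Gaussian is bounded by a constant multiple of $\min(1,x^{-2})+e^{-x^2/4}$ uniformly for $\eta\in(0,1]$, which is integrable in $x$, so the appeal to dominated convergence goes through.
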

\begin{proof} By (\ref{tw}) we have,
\begin{align}
\langle  e^{itX} \rangle&=\frac{1}{2\pi i}\oint_C e^{itz} \langle
\Phi, R(z;X)\Phi
\rangle \,dz \\
&=\frac{1}{2\pi i}\oint_C e^{itz}\int_{\mathbb{R}} \Phi(s) R(z,
X)\Phi(s)\,ds \,dz \notag\\
&=\frac{1}{2\pi^{3/2} i}\oint_C e^{itz}\int_{\mathbb{R}}
\frac{e^{-s^2} }{z-s}\,ds \,\,dz\notag\\
&=\frac{1}{2\pi^{3/2} i}\oint_C\int_{\mathbb{R}} \frac{e^{itz-s^2}
}{z-s}\,ds \,\,dz\notag\\
&=\frac{1}{2\pi^{3/2} i}\lim_{\varepsilon\to 0^+}\lim_{r\to
\infty}\oint_{C_{r+\varepsilon}} \int_{-r}^r \frac{e^{itz-s^2}
}{z-s}\,ds \,\,dz \notag \  ,
\end{align}
where,
\begin{equation}
C_{r+\varepsilon}=\{z\in\mathbb{C} \,:\, |z|=r+\varepsilon  \} \ .
\end{equation}
Thus, by Cauchy's integral formula,
\begin{align}
\langle  e^{itX} \rangle&=\frac{1}{2\pi^{3/2}
i}\lim_{\varepsilon\to 0^+}\lim_{r\to \infty} \int_{-r}^r
\oint_{C_{r+\varepsilon}}\frac{e^{itz-s^2}
}{z-s}\,dz \,\,ds\\
&=\frac{1}{2\pi^{3/2} i}\lim_{\varepsilon\to 0^+}\lim_{r\to
\infty}\int_{-r}^re^{-s^2}\oint_{C_ {r+\varepsilon }}
\frac{e^{itz} }{z-s}\,dz \,\,ds\notag\\
&=\frac{1}{2\pi^{3/2} i}\lim_{r\to \infty}\int_{-r}^re^{-s^2}\, 2\pi i\,e^{its} \,ds\notag\\
&=\frac{1}{\pi^{1/2} }\int_{\mathbb{R}}e^{its-s^2} \,ds
=\frac{1}{\pi^{1/2} } \pi^\frac{1}{2} e^{-\frac{t^2}{4} } =
e^{-\frac{t^2}{4} } \notag\  .
\end{align}
\end{proof}

\section{The Vacuum Characteristic Function of $P$  }

\begin{theorem}\label{Pna}  The operator  $P$ has only continuous spectrum consisting
of the entire
real line. Moreover, for $g$ in the range of $z-P$, the resolvent
operator $R(z; P)$ is
\begin{equation}
R(z;P)g(s)=\left\{
\begin{array}{llr}
-i\int_{-\infty}^s e^{iz(s-w)}g(w) \,dw , &\,\, {\rm Im}\,z> 0   \\
& \\
 i\int^{\infty}_s e^{iz(s-w)}g(w) \,dw ,&\,\, {\rm
Im}\, z < 0
\end{array}
\right. \ .
\end{equation}
For $g$ not identically equal to zero almost everywhere,
$R(z;P)g(s)$ is not a continuous (thus not analytic) function of
$z\in\mathbb{C}$, at real $z$'s, for almost all $s$.
\end{theorem}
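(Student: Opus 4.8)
The plan is to reduce the spectral statement to Theorem \ref{sX} via the unitary equivalence $(\ref{UE})$. Since $P = UXU^{-1}$ with $U$ unitary, we have $z-P = U(z-X)U^{-1}$, so injectivity and the density/closedness of the range are preserved: the spectrum of $P$ coincides with that of $X$, together with its decomposition into point, continuous, and residual parts. By Theorem \ref{sX} this spectrum is all of $\mathbb{R}$ and is purely continuous. Alternatively, I would argue directly: a real eigenvalue $z$ would require $-iG' = zG$, hence $G(s)=Ce^{izs}$, which has constant modulus and lies in $L^2(\mathbb{R},\mathbb{C})$ only if $C=0$, so $P$ has no point spectrum; self-adjointness (\cite{Richtmyer}, Theorem 1) rules out residual spectrum, leaving only continuous spectrum on $\mathbb{R}$.

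For the resolvent formula I would solve $(z-P)G=g$ pointwise. Since $PG=-iG'$, this is the first-order linear ODE $iG' + zG = g$, whose integrating factor $e^{-izs}$ gives $(e^{-izs}G)' = -ie^{-izs}g$. Integrating and fixing the constant of integration by the requirement $G\in L^2(\mathbb{R},\mathbb{C})$ selects the lower limit: for $\mathrm{Im}\,z>0$ the factor $e^{izs}$ decays at $-\infty$, forcing integration from $-\infty$, while for $\mathrm{Im}\,z<0$ it decays at $+\infty$, forcing integration from $+\infty$. This produces exactly the two stated formulas. To confirm that $G$ is genuinely in the domain and inverts $z-P$, I would write $R(z;P)g$ as a convolution $k_z*g$ with kernel $k_z(u)=-ie^{izu}\chi_{[0,\infty)}(u)$ (for $\mathrm{Im}\,z>0$), note $\|k_z\|_{L^1}=1/\mathrm{Im}\,z$, and invoke Young's inequality to obtain $\|R(z;P)g\|\le \|g\|/|\mathrm{Im}\,z|$, consistent with the resolvent bound for a self-adjoint operator.

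The heart of the theorem is the non-continuity claim. Fixing a real $z_0$ and letting $z\to z_0$ through the upper and lower half-planes, dominated convergence (with the $L^1$ kernel bound above furnishing the dominating function) yields the two one-sided boundary values, whose difference is
\begin{equation}
-i\int_{-\infty}^{\infty} e^{iz_0(s-w)}g(w)\,dw = -i\sqrt{2\pi}\,e^{iz_0 s}\,\hat{g}(-z_0),
\end{equation}
where $\hat{g}$ is the Fourier transform $(\ref{FT})$. Because $g$ is not almost everywhere zero, injectivity of the Fourier transform forces $\hat{g}\not\equiv 0$, so $\hat{g}(-z_0)\ne 0$ on a set of real $z_0$ of positive measure; for any such $z_0$ the factor $e^{iz_0 s}$ never vanishes, so the two one-sided limits disagree for every $s$. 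Hence $z\mapsto R(z;P)g(s)$ has a genuine jump across the real axis and cannot be continuous — and therefore not analytic — there, for almost every $s$ (the hedge ``almost all $s$'' being forced only because $R(z;P)g$ is defined as an $L^2$ function).

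The main obstacle I anticipate is the rigorous pointwise-in-$s$ passage to the boundary: one must verify that the one-sided limits exist for almost every $s$ and may be computed by interchanging limit and integral. I would handle this through the convolution representation and dominated convergence, first for $g$ in a convenient dense class (e.g.\ $g\in L^1\cap L^2$ or $g$ compactly supported) and then observing that the explicit jump expression above depends continuously on $g$ and so persists. The remaining steps are routine once the ODE solution and the kernel estimate are in hand.
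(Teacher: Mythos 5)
Your proposal is correct, and it overlaps with the paper's proof on the point/residual spectrum (same eigenvalue ODE, same appeal to self-adjointness) while diverging on the other three components. For the continuous spectrum the paper argues directly, exhibiting the explicit Weyl family $g_\beta(s)=\beta^{1/4}e^{-\beta s^2}e^{izs}$ with $\|(z-P)g_\beta\|\to 0$ and $\|g_\beta\|$ constant to show $R(z;P)$ is unbounded at each real $z$, and separately transports a density-of-range argument from $X$ through the Fourier transform; you instead transfer the entire spectral decomposition at once from Theorem \ref{sX} via $z-P=U(z-X)U^{-1}$. Your route is shorter, but it silently uses the full operator identity $U\,{\rm dom}(X)={\rm dom}(P)$, whereas the paper only establishes (\ref{UE}) on $\mathcal{S}$ and, in Remark \ref{rem}, for specific functions; either cite that standard fact or retain your direct backup argument. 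For the resolvent formula the paper simply cites \cite{BFspec} and Richtmyer, so your integrating-factor derivation together with the kernel bound $\|k_z\|_{L^1}=1/{\rm Im}\,z$ and Young's inequality is a self-contained verification the paper does not supply, and it buys the expected estimate $\|R(z;P)g\|\le\|g\|/|{\rm Im}\,z|$. For the non-continuity claim both arguments rest on the same jump $\int_{\mathbb{R}}e^{iz(s-w)}g(w)\,dw$ between the two one-sided expressions, but the paper only verifies it is nonzero for $g=\Phi$ and asserts it is ``not true in general,'' whereas your identification of the jump with $\sqrt{2\pi}\,e^{izs}\hat g(-z)$ and appeal to injectivity of the Fourier transform proves the claim for every $g$ that is not a.e.\ zero, at every real $z$ in the positive-measure set where $\hat g(-z)\neq 0$ --- which is all the theorem's (somewhat loosely phrased) conclusion can mean, and is exactly the scope of the paper's own argument. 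The one step to tighten is the one you flag yourself: for general $g\in L^2$ the one-sided integrals need not converge absolutely at real $z$, so the boundary-value computation should first be carried out for $g\in L^1\cap L^2$ (which covers $g=\Phi$ and every $g$ the paper actually uses) before any density or limiting argument is invoked.
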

\begin{proof} Since $P$ is self-adjoint, its spectrum is real. If $z\in\mathbb{R}$ is
an eigenvalue of $P$, then a corresponding
eigenfunction $G$ would satisfy
\begin{equation}
i\,G^{\prime}(s)+zG(s)=0 \,,\, s\in\mathbb{R}\ ,
\end{equation}
which would imply that
\begin{equation}
G(s)=c e^{izs}\, , c\in\mathbb{C}\ ,
\end{equation}
which is in $L^2(\mathbb{R},\mathbb{C})$ if and only if $c=0$,
i.e., if and only if $G=0$. Thus $P$ has no point spectrum.

\medskip

To see that every real number $z$ is in the continuous spectrum of
$P$ we consider (see \cite{Richtmyer}, p.190) the function
\begin{equation}
g_\beta(s)= \beta^{1/4} e^{-\beta s^2}e^{i z s}\,,\,\beta>0 \ ,
\end{equation}
and let
\begin{equation}
f_\beta(s)=(z-P)g_\beta(s)=-2 i \beta^{5/4} s e^{-\beta s^2}e^{i z
s} \  .
\end{equation}
Then,
\begin{equation}
 \|f_\beta\|^2=   \|(z-P)g_\beta\|^2=\beta \,\sqrt{\frac{\pi}{2}}   \to 0 \mbox{ as } \beta\to
0 \ ,
\end{equation}
and
\begin{equation}
\|R(z; P)f_\beta\|^2=\|g_\beta\|^2=\sqrt{\frac{\pi}{2}} \  .
\end{equation}
Thus, there can be no non-negative real number $M$ such that, for
all $ \beta>0$,
\begin{equation}
\|R(z; P)f_\beta\|\leq M \|f_\beta\| \ .
\end{equation}
Thus $R(z; P)$ is not continuous for any real $z$. Since $P$ is
self-adjoint, its residual spectrum is empty (\cite{Richtmyer},
Theorem 1). Therefore, every $z\in\mathbb{R}$ is in the continuous
spectrum of $P$.

Alternatively, to show that the range of $z-P$ is dense in $L^2(\mathbb{R},\mathbb{C})$,
 let $U$ and $U^{-1}$ denote the Fourier transform and its inverse as in (\ref{FT})
 and (\ref{IFT}), let $ g, f\in \mathcal{S}$ with $g=Uf$, and let
  $(f_n)$ be the sequence defined in (\ref{deffn}) with $(z-X)f_n\to f$, in the $L^2$-sense,
  as $n\to\infty$. Then, by (\ref{UE}),  $U^{-1}(z-P)Uf_n\to f$,
   in the $L^2$-sense, as $n\to\infty$ . Thus, the sequence
\begin{align}
g_n(s)&:=Uf_n(s)=\hat{f_n}(s)=(2
\pi)^{-1/2}\,\int_{-\infty}^\infty
\,e^{i\lambda s}f_n(\lambda)\,d\lambda \label{tb}\\
&=(2 \pi)^{-1/2}\,\int_{\mathbb{R}-\left(z-\frac{1}{n},
z+\frac{1}{n}\right)} \,\frac{e^{i\lambda s}}{z-\lambda}
 f(\lambda)\,d\lambda \notag\\
&=(2 \pi)^{-1/2}\,\int_{\mathbb{R}-\left(z-\frac{1}{n},
z+\frac{1}{n}\right)} \,\frac{e^{i\lambda s}}{z-\lambda}
  (U^{-1}g)(\lambda)\,d\lambda \notag\\
  &=(2 \pi)^{-1}\,\int_{\mathbb{R}-\left(z-\frac{1}{n},
z+\frac{1}{n}\right)} \int_{\mathbb{R}}\frac{e^{i\lambda
(s-w)}}{z-\lambda} g(w)\,dw \, d\lambda \ , \notag
\end{align}
is in the domain of $P$ and,  by the continuity of the Fourier
transform,  $U^{-1}(z-P)Uf_n\to f$ implies $(z-P)g_n\to g$, in the
$L^2$-sense, as $n\to\infty$.

\medskip

 For $z\in\mathbb{C}\setminus \mathbb{R}$ and  $s\in\mathbb{R}$, as shown in
 \cite{BFspec} (see also \cite{Richtmyer} p. 191),
\begin{equation}\label{res}
G(s):=R(z;P)g(s)=\left\{
\begin{array}{llr}
-i\int_{-\infty}^s e^{iz(s-w)}g(w) \,dw  , &\,\, {\rm Im}\,z> 0   \\
& \\
 i\int^{\infty}_s e^{iz(s-w)}g(w) \,dw ,&\,\,{\rm
Im}\, z<0
\end{array}
\right. \ .
\end{equation}
If ${\rm Im}\, z=0$ then  $G$ cannot be  in the domain of $P$,
since then
\begin{equation}
\lim_{t\to \pm\infty}|e^{-izt} G(t)|=\lim_{t\to \pm\infty}
|G(t)|=0 \  ,
\end{equation}
which, as  in the proof of Proposition 9.2 of \cite{BFspec}, would
allow us to integrate from $-\infty$ to $s$ and also from $s$ to
$\infty$ and get two expressions for $G(s)$, namely
\begin{equation}
G(s)=\left\{
\begin{array}{llr}
-i\int_{-\infty}^s e^{iz(s-w)}g(w) \,dw  \\
& \\
 i\int^{\infty}_s e^{iz(s-w)}g(w) \,dw
\end{array}
\right. \ ,
\end{equation}
which, after equating them, would imply that
\begin{equation}
\int_{-\infty}^\infty e^{iz(s-w)}g(w) \,dw=0 \ ,
\end{equation}
which is not true in general. For example, for $g=\Phi$,
\begin{equation}
\int_{-\infty}^\infty e^{iz(s-w)}g(w)
\,dw=\sqrt{2}\pi^{1/4}e^{isz-\frac{z^2}{2}}\neq 0 \  .
\end{equation}
Thus for $g$ not identically equal to zero almost everywhere,
$R(z;P)g(s)$ is not a continuous (thus not analytic) function of
$z\in\mathbb{C}$, at real $z$'s, for almost all $s$.
\end{proof}

\begin{remark}\label{rem} \rm
The appeal to (\ref{UE}) in the proof of Theorem \ref{Pna} can be
justified as follows. We need to show that, for $n\in\mathbb{N}$
and $z\in\mathbb{R}$, the function $f_n$ defined by
\begin{equation}
f_n(s)= \frac{f(s)}{z-s}\, \chi_{{}_{ \left(z-\frac{1}{n},
z+\frac{1}{n}\right)^c}}(s) \  ,
\end{equation}
has the properties: $f_n \in\Omega  \, , \, Uf_n\in{\rm dom }(P) \, , \,
U^{-1}f_n\in{\rm dom }(X) $.

To show that $f_n \in\Omega$  we have:
\begin{align}
\int_{-\infty}^\infty|f_n(s)|^2\,ds&=\int_{-\infty}^{z-\frac{1}{n}}
\frac{|f(s)|^2}{|z-s|^2} \,ds+\int_{z+\frac{1}{n}}^{\infty}
\frac{|f(s)|^2}{|z-s|^2} \,ds\\
&\leq n^2 \left( \int_{-\infty}^{z-\frac{1}{n}} |f(s)|^2
\,ds+\int_{z+\frac{1}{n}}^{\infty} |f(s)|^2 \,ds \right)\notag\\
 &\leq n^2 \|f\|^2 <\infty \  ,\notag
\end{align}
so $f_n\in L^2(\mathbb{R},\mathbb{C})$.

The derivative of $f_n$  exists almost everywhere, i.e. at all
$s\in\mathbb{R}\setminus \{z\pm \frac{1}{n}\}$, and is given by
\begin{equation}
f_n^\prime (s)=\left\{
\begin{array}{llr}
\frac{f^\prime (s)}{z-s}+\frac{f (s)}{(z-s)^2}  \,,\,  s \notin \left[z-\frac{1}{n}, z+\frac{1}{n}\right]  \\
& \\
 0 \,,\,  s \in \left(z-\frac{1}{n}, z+\frac{1}{n}\right)
\end{array}
\right. \  .
\end{equation}
For $\phi \in C_0^{\infty}(\mathbb{R})$ with $\supp (\phi)=K$ and
$\rho=\max \{|\max K|, |\min K| \}$,
\begin{align}
&|\langle f_n^{\prime}, \phi \rangle|
=\left|\int_{\mathbb{R}}f_n^{\prime}(s) \phi(s)\,ds\right|
 \leq \int_{\mathbb{R}}|f_n^{\prime}(s) \phi(s)|\,ds\\
 &\leq \int_{K\cap \left[z-\frac{1}{n}, z+\frac{1}{n}\right]^c }\frac{|f(s)-s f^{\prime}(s)|}{|z-s|^2 }\, |\phi(s)|\,ds
 \notag\\
 &\leq n^2 \int_{K\cap \left[z-\frac{1}{n}, z+\frac{1}{n}\right]^c }|f(s)-s f^{\prime}(s)|\,
 |\phi(s)|\,ds\notag \\
&\leq n^2\left( \int_{K\cap \left[z-\frac{1}{n},
z+\frac{1}{n}\right]^c }|f(s)-s
f^{\prime}(s)|^2\,ds\right)^{1/2}\,\left(\int_{K\cap
\left[z-\frac{1}{n}, z+\frac{1}{n}\right]^c
}|\phi(s)|^2\,ds\right)^{1/2}\notag\\
&\leq n^2\left( \int_{K\cap \left[z-\frac{1}{n},
z+\frac{1}{n}\right]^c }|f(s)-s
f^{\prime}(s)|^2\,ds\right)^{1/2}\,\|\phi\|\notag\\
&\leq n^2\left( \int_{K\cap \left[z-\frac{1}{n},
z+\frac{1}{n}\right]^c }\left(2|f(s)|^2+2 |s|^2
|f^{\prime}(s)|^2\right)\,ds\right)^{1/2}\,\|\phi\|\notag\\
&\leq n^2 \sqrt{2}  \left( \int_{K\cap \left[z-\frac{1}{n},
z+\frac{1}{n}\right]^c }\left(|f(s)|^2+ \rho^2
|f^{\prime}(s)|^2\right)\,ds\right)^{1/2}\,\|\phi\|\notag\\
&\leq  M\, \|\phi \| \  ,\notag
\end{align}
where
\begin{equation}
M=n^2 \sqrt{2}  \left(\|f\|+\rho^2 \|f^\prime \| \right)\ ,
\end{equation}
so $f_n^{\prime}$ is in $L^2(\mathbb{R},\mathbb{C})$ as a
distribution, so  $f_n$ is absolutely continuous.

Moreover, as in the proof of $f_n\in L^2(\mathbb{R},\mathbb{C})$
and the absolute continuity of $f_n$,

\begin{equation}\label{bb}
\int_{\mathbb{R}}s^2|f_n(s)|^2\,ds\leq n^2
\int_{\mathbb{R}}s^2|f(s)|^2\,ds<\infty \ ,
\end{equation}
since $f\in\mathcal{S}$ implies $f\in {\rm dom }(X)$, and
\begin{align}
\int_{\mathbb{R}}|f_n^\prime(s)|^2\,ds&=\int_{
\left[z-\frac{1}{n}, z+\frac{1}{n}\right]^c }\frac{|f(s)-s
f^{\prime}(s)|*2}{|z-s|^2 }\,ds \\
&\leq n^4 \int_{ \left[z-\frac{1}{n}, z+\frac{1}{n}\right]^c
}|f(s)-s
f^{\prime}(s)|^2\,ds  \notag\\
& \leq n^4 \int_{ \left[z-\frac{1}{n}, z+\frac{1}{n}\right]^c
}\left(|f(s)|+|s| \,
|f^{\prime}(s)|\right)^2\,ds  \notag\\
 &\leq 2 n^4 \int_{ \left[z-\frac{1}{n}, z+\frac{1}{n}\right]^c
}\left(|f(s)|^2+|s|^2 \,
|f^{\prime}(s)|^2\right)\,ds  \notag\\
&\leq 2 n^4 \left(\int_{\mathbb{R}}|f(s)|^2\,ds
+\int_{\mathbb{R}}s^2|f(s)|^2\,ds \right) <\infty \notag \ ,
\end{align}
since  $f\in\mathcal{S}$ implies $f\in \Omega$.

To show that $Uf_n\in{\rm dom }(P)$, we notice that  $f_n\in L^2(\mathbb{R},\mathbb{C})$ implies
$Uf_n\in L^2(\mathbb{R},\mathbb{C})$ and for $\phi \in
C_0^{\infty}(\mathbb{R})$ with $\supp (\phi)=K$ and $M=\|Uf_n\|$,
\begin{align}
|\langle Uf_n&, \phi \rangle| =\left|\int_{\mathbb{R}}(Uf_n)(t)
\phi(t)\,dt\right|
 \leq \int_{\mathbb{R}}|(Uf_n)(t) \phi(t)|\,dt\\
 &\leq \left( \int_{K}|(Uf_n)(t)|^2\,dt\right)^{1/2}\,\left(\int_{K
}|\phi(s)|^2\,ds\right)^{1/2}\notag\\
&\leq M \,  \|\phi \|  \notag\ .
\end{align}
Moreover, with our definition of the Fourier transform in
(\ref{FT}),
\begin{equation}
U(Xf_n)=-i(Uf_n)^\prime \ ,
\end{equation}
and, using (\ref{bb}), we conclude that $U(Xf_n) \in
L^2(\mathbb{R},\mathbb{C})$ which implies $(Uf_n)^\prime \in
L^2(\mathbb{R},\mathbb{C})$.

Finally, to show that  $U^{-1}f_n\in{\rm dom }(X)$, we notice that  $f_n\in L^2(\mathbb{R},\mathbb{C})$
implies $U^{-1}f_n\in L^2(\mathbb{R},\mathbb{C})$ and, letting
$\phi_n=U^{-1}f_n$,
\begin{align}
\int_{\mathbb{R}}s^2 |(U^{-1}f_n)(s)|^2\,ds&=\int_{\mathbb{R}}
|s\phi_n(s)|^2\,ds \\
 &=\int_{\mathbb{R}}
|(X\phi_n)(s)|^2\,ds =\|X\phi_n\|^2=\|U(X\phi_n)\|^2   \notag\\
&=\|\left( U\phi_n \right)^\prime\|^2=\|f_n^\prime\|^2<\infty \  .
\notag
\end{align}
\end{remark}

\begin{remark} \rm
For
\begin{equation}
g(w)=\Phi(w)=\pi^{-1/4}\,e^{-\frac{w^2}{2}} \ ,
\end{equation}
the formula for the resolvent of $P$ takes the form
\begin{equation}
R(z;P)\Phi(s)=\left\{
\begin{array}{llr}
-i
\pi^{-1/4}\int_{-\infty}^s e^{iz(s-w)-\frac{w^2}{2}} \,dw  ,&\,\, {\rm Im}\, z> 0   \\
& \\
i \pi^{-1/4}\int^{\infty}_s e^{iz(s-w)-\frac{w^2}{2}} \,dw
 , &\,\, {\rm Im}\,z< 0
\end{array}
\right. \ .
\end{equation}
\end{remark}

\begin{theorem}\label{P} For $t\in \mathbb{R}$, the vacuum characteristic function of  $P$
 is
\begin{equation}
\langle  e^{i t P}  \rangle= e^{-\frac{t^2}{4}} \ .
\end{equation}
\end{theorem}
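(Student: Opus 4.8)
The plan is to avoid evaluating the contour integral (\ref{tw}) from scratch with the resolvent of Theorem~\ref{Pna}, and instead to reduce everything to the already-proven formula for $X$ (Theorem~\ref{X}) by means of the unitary equivalence (\ref{UE}). The one fact that makes this work is that the vacuum vector $\Phi$ is a fixed point of the Fourier transform $U$. I would establish this first by a direct Gaussian computation: by the definition (\ref{FT}) and formula (\ref{i1}) with $\varepsilon=\tfrac12$ and $\alpha=t$,
\[
(U\Phi)(t)=(2\pi)^{-1/2}\pi^{-1/4}\int_{-\infty}^{\infty}e^{i\lambda t}e^{-\lambda^{2}/2}\,d\lambda=(2\pi)^{-1/2}\pi^{-1/4}(2\pi)^{1/2}e^{-t^{2}/2}=\Phi(t),
\]
so that $U\Phi=\Phi$, and since $U$ is unitary also $U^{-1}\Phi=\Phi$.

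Next I would transport the relation $Pf=UXU^{-1}f$ of (\ref{UE}) to the resolvent and to the unitary group. Since conjugation by a unitary commutes with the holomorphic (equivalently, the Borel) functional calculus, $R(z;P)=U\,R(z;X)\,U^{-1}$ for $z\in\mathbb{C}\setminus\mathbb{R}$ and $e^{itP}=U\,e^{itX}\,U^{-1}$. Using $U^{-1}\Phi=\Phi$ and the unitarity of $U$, the integrand of (\ref{tw}) for $P$ then coincides with the one for $X$,
\[
\langle\Phi,R(z;P)\Phi\rangle=\langle U^{-1}\Phi,R(z;X)U^{-1}\Phi\rangle=\langle\Phi,R(z;X)\Phi\rangle,
\]
so the contour integral is \emph{identical} to the one already evaluated in the proof of Theorem~\ref{X}. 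Equivalently, and most directly, $\langle e^{itP}\rangle=\langle\Phi,e^{itP}\Phi\rangle=\langle U^{-1}\Phi,e^{itX}U^{-1}\Phi\rangle=\langle\Phi,e^{itX}\Phi\rangle=e^{-t^{2}/4}$.

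I expect the main obstacle to be one of rigor rather than of computation, namely justifying that (\ref{UE}), which a priori holds on $\mathcal{S}$ (and under the conditions (\ref{UEC})), genuinely extends to the resolvents on all of $L^{2}(\mathbb{R},\mathbb{C})$ and to the bounded groups $e^{itX},e^{itP}$; the resolvent half of this is precisely what Remark~\ref{rem} verifies, and the group half then follows from the spectral theorem together with the density of $\mathcal{S}$. By contrast, carrying out the purely contour-based route that worked so cleanly for $X$ would be markedly harder here: as emphasized in Theorem~\ref{Pna}, $\langle\Phi,R(z;P)\Phi\rangle$ is given by different analytic expressions in the upper and lower half-planes and is discontinuous across the spectrum $\mathbb{R}$, so the single-residue evaluation $\oint_{C}e^{itz}(z-s)^{-1}\,dz=2\pi i\,e^{its}$ that trivialized the $X$ case is unavailable, and one would instead have to split $C_{r+\varepsilon}$ into its two semicircles and control each by a Jordan-type estimate whose outcome depends on the sign of $t$. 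The unitary-equivalence reduction sidesteps all of this and simply reuses the evaluation already performed for $X$.
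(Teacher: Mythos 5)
Your proposal is correct, but it takes a genuinely different route from the paper's. The paper computes $\langle e^{itP}\rangle$ head-on inside its contour framework: it takes the explicit resolvent formula of Theorem \ref{Pna}, splits the contour into the two semicircular arcs $C^{\pm}(\varepsilon,r)$ to avoid the discontinuity of $\langle\Phi,R(z;P)\Phi\rangle$ across $\mathbb{R}$, deforms each arc by Cauchy's theorem onto the horizontal segments $a^{\pm}(\varepsilon,r)$, and then evaluates the resulting triple Gaussian integrals using (\ref{i1}). You instead observe that $U\Phi=\Phi$ and $P=UXU^{-1}$, so that $\langle\Phi,R(z;P)\Phi\rangle=\langle\Phi,R(z;X)\Phi\rangle$ identically on $\mathbb{C}\setminus\mathbb{R}$ (indeed a direct substitution shows both equal $-i\int_0^\infty e^{izu-u^2/4}\,du$ for ${\rm Im}\,z>0$), and the whole problem collapses onto Theorem \ref{X}. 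Your route is shorter and explains \emph{why} $X$ and $P$ have the same vacuum law; the paper's route is deliberately self-contained because the semicircle-plus-segment deformation is precisely the template it reuses for $X+P$ and $XP+PX$, where no such unitary shortcut exists. Two small points of rigor in your write-up: first, you need $P=UXU^{-1}$ as an identity of self-adjoint operators, not merely on $\mathcal{S}$; this holds because $\mathcal{S}$ is a core for both operators and $U\mathcal{S}=\mathcal{S}$, a standard fact that the paper uses but does not prove, and it is what licenses the conjugation of resolvents and of the groups $e^{itX}$, $e^{itP}$. Second, Remark \ref{rem} verifies the hypotheses (\ref{UEC}) for the particular sequence $f_n$ of (\ref{deffn}) used in the density argument of Theorem \ref{Pna}; it does not establish the conjugation of resolvents, so it cannot carry quite the weight you assign to it. Neither point is a gap in substance, and your final chain $\langle\Phi,e^{itP}\Phi\rangle=\langle U^{-1}\Phi,e^{itX}U^{-1}\Phi\rangle=\langle\Phi,e^{itX}\Phi\rangle=e^{-t^2/4}$ is valid.
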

\begin{proof}
Because of the non-analyticity of $R(z;P)\Phi(s)$  at real $z$'s,
we compute the vacuum characteristic function of $P$ as follows:
\begin{align}\label{cff}
\langle & e^{itP} \rangle=\frac{1}{2\pi i}\lim_{r\to\infty}\lim_{\varepsilon\to 0^+}\\
&\left( \oint_{C^+({\varepsilon, r})} e^{itz}\langle \Phi,
R(z;P)\Phi \rangle \,dz+\oint_{C^-({\varepsilon, r})}
e^{itz}\langle \Phi, R(z;P)\Phi \rangle \,dz\right)\ ,\notag
\end{align}
where,
\begin{equation}
C^+({\varepsilon, r})=\{z\in\mathbb{C} \,:\, |z|=r, \, \varepsilon
\leq {\rm Im }z \leq r \} \  ,
\end{equation}
and
\begin{equation}
 C^-({\varepsilon, r})=\{z\in\mathbb{C} \,:\, |z|=r, \, -r\leq {\rm Im }z\leq -\varepsilon  \}  \ ,
\end{equation}
are  semi-circular paths both  centered at $z=0$  with radius $r$.
The ends of the arcs, in the complex plane,
 are the points  $\pm\sqrt{r^2-\varepsilon^2}+i\varepsilon$ and
 $\pm\sqrt{r^2-\varepsilon^2}-i\varepsilon$, respectively.

We close $C^+({\varepsilon, r})$ and $C^-({\varepsilon, r})$ with
the linear paths
\begin{equation}\label{ap}
a^+({\varepsilon, r})=\{z\in\mathbb{C} \,:\, z=x+i\varepsilon, \,
-\sqrt{r^2-\varepsilon^2} \leq x \leq \sqrt{r^2-\varepsilon^2} \}
\ ,
\end{equation}
and
\begin{equation}
a^-({\varepsilon, r})=\{z\in\mathbb{C} \,:\, z=x-i\varepsilon, \,
-\sqrt{r^2-\varepsilon^2} \leq x \leq \sqrt{r^2-\varepsilon^2} \}\
,
\end{equation}
respectively.

 In view of (\ref{res}), $ e^{itz}\langle \Phi,
R(z;P)\Phi \rangle$ is an analytic function of $z$, inside and on
the closed contours
 \begin{equation}
\gamma^+({\varepsilon, r}):= C^+({\varepsilon,
r})+a^+({\varepsilon, r}) \,,\,\gamma^-({\varepsilon, r}):=
C^-({\varepsilon, r})+a^-({\varepsilon, r}) \  .
\end{equation}
Thus, by Cauchy's theorem,
\begin{equation}
\oint_{\gamma^+({\varepsilon, r})} e^{itz}\langle \Phi, R(z;P)\Phi
\rangle \,dz=\oint_{\gamma^-({\varepsilon, r})} e^{itz}\langle
\Phi, R(z;P)\Phi \rangle \,dz=0 \ ,
\end{equation}
which implies that
\begin{equation}
\oint_{C^+({\varepsilon, r})} e^{itz}\langle \Phi, R(z;P)\Phi
\rangle \,dz=-\oint_{a^+({\varepsilon, r})} e^{itz}\langle \Phi,
R(z;P)\Phi \rangle \,dz \ ,
\end{equation}
and
\begin{equation}
\oint_{C^-({\varepsilon, r})} e^{itz}\langle \Phi, R(z;P)\Phi
\rangle \,dz=-\oint_{a^-({\varepsilon, r})} e^{itz}\langle \Phi,
R(z;P)\Phi \rangle \,dz \  .
\end{equation}
Using (\ref{res}), (\ref{ap})  and (\ref{phi}) we have:
\begin{align}
&\oint_{a^+({\varepsilon, r})} e^{itz}\langle \Phi, R(z;P)\Phi
\rangle \,dz\\
&=\int_{-\sqrt{r^2-\varepsilon^2}}^{\sqrt{r^2-\varepsilon^2}}
e^{it(x+i\varepsilon)}\langle \Phi,
R(x+i\varepsilon;P)\Phi \rangle \,dx\notag\\
&=\int_{-\sqrt{r^2-\varepsilon^2}}^{\sqrt{r^2-\varepsilon^2}}
e^{it(x+i\varepsilon)}\int_{-\infty}^\infty \Phi(s)
R(x+i\varepsilon;P)\Phi(s)\,ds \,dx \notag\\
&=-i
\pi^{-1/2}\int_{-\sqrt{r^2-\varepsilon^2}}^{\sqrt{r^2-\varepsilon^2}}
e^{it(x+i\varepsilon)}\int_{-\infty}^\infty
e^{-\frac{s^2}{2}}\int_{-\infty}^s
e^{i(x+i\varepsilon)(s-w)-\frac{w^2}{2}}\,dw\,ds \,dx \notag \   .
\end{align}
Thus
\begin{align}
&\lim_{r\to\infty}\lim_{\varepsilon\to
0^+}\oint_{a^+({\varepsilon, r})} e^{itz}\langle
\Phi, R(z;P)\Phi \rangle \,dz\\
&=-i \pi^{-1/2}\int_{-\infty}^{\infty}
e^{itx}\int_{-\infty}^\infty e^{-\frac{s^2}{2}}\int_{-\infty}^s
e^{ix(s-w)-\frac{w^2}{2}}\,dw\,ds \,dx \notag
\\
&=-i \pi^{-1/2}\int_{-\infty}^\infty e^{itx}\int_{-\infty}^\infty
e^{-\frac{s^2}{2}+ixs}\int_{-\infty}^s
e^{-ixw-\frac{w^2}{2}}\,dw\,ds \,dx \notag \ .
\end{align}
Similarly,
\begin{align}
&\oint_{a^-({\varepsilon, r})} e^{itz}\langle \Phi, R(z;P)\Phi
\rangle \,dz\\
&=\int_{\sqrt{r^2-\varepsilon^2}}^{-\sqrt{r^2-\varepsilon^2}}
e^{it(x-i\varepsilon)}\langle \Phi,
R(x-i\varepsilon;P)\Phi \rangle \,dx\notag\\
&=-\int_{-\sqrt{r^2-\varepsilon^2}}^{\sqrt{r^2-\varepsilon^2}}
e^{it(x-i\varepsilon)}\langle \Phi,
R(x-i\varepsilon;P)\Phi \rangle \,dx\notag\\
&=-\int_{-\sqrt{r^2-\varepsilon^2}}^{\sqrt{r^2-\varepsilon^2}}
e^{it(x-i\varepsilon)}\int_{-\infty}^\infty \Phi(s)
R(x-i\varepsilon;P)\Phi(s)\,ds \,dx \notag\\
&=-i
\pi^{-1/2}\int_{-\sqrt{r^2-\varepsilon^2}}^{\sqrt{r^2-\varepsilon^2}}
e^{it(x-i\varepsilon)}\int_{-\infty}^\infty
e^{-\frac{s^2}{2}}\int_s^{\infty}
e^{i(x-i\varepsilon)(s-w)-\frac{w^2}{2}}\,dw\,ds \,dx \notag \   .
\end{align}
Thus
\begin{align}
&\lim_{r\to\infty}\lim_{\varepsilon\to
0^+}\oint_{a^-({\varepsilon, r})} e^{itz}\langle
\Phi, R(z;P)\Phi \rangle \,dz\\
&=-i \pi^{-1/2}\int_{-\infty}^{\infty}
e^{itx}\int_{-\infty}^\infty e^{-\frac{s^2}{2}}\int_{-\infty}^s
e^{ix(s-w)-\frac{w^2}{2}}\,dw\,ds \,dx \notag
\\
&=-i \pi^{-1/2}\int_{-\infty}^\infty e^{itx}\int_{-\infty}^\infty
e^{-\frac{s^2}{2}+ixs}\int_{-\infty}^s
e^{-ixw-\frac{w^2}{2}}\,dw\,ds \,dx \notag \ .
\end{align}
Therefore, by (\ref{cff})
\begin{align}
&\langle  e^{itP} \rangle=\frac{1}{2\pi i}\lim_{r\to\infty}\lim_{\varepsilon\to 0^+}\\
&\left( \oint_{C^+({\varepsilon, r})} e^{itz}\langle \Phi,
R(z;P)\Phi \rangle \,dz+\oint_{C^-({\varepsilon, r})}
e^{itz}\langle \Phi,
R(z;P)\Phi \rangle \,dz\right)\notag\\
&=-\frac{1}{2\pi i}\lim_{r\to\infty}\lim_{\varepsilon\to 0^+}\notag\\
&\left( \oint_{a^+({\varepsilon, r})} e^{itz}\langle \Phi,
R(z;P)\Phi \rangle \,dz+\oint_{a^-({\varepsilon, r})}
e^{itz}\langle \Phi,
R(z;P)\Phi \rangle \,dz\right)\notag\\
&=-\frac{1}{2\pi i}\left(-i \pi^{-1/2}\int_{-\infty}^\infty
e^{itx}\int_{-\infty}^\infty
e^{-\frac{s^2}{2}+ixs}\int_{-\infty}^s
e^{-ixw-\frac{w^2}{2}}\,dw\,ds \,dx \right. \notag\\
&\left.-i \pi^{-1/2}\int_{-\infty}^\infty
e^{itx}\int_{-\infty}^\infty
e^{-\frac{s^2}{2}+ixs}\int_{-\infty}^s
e^{-ixw-\frac{w^2}{2}}\,dw\,ds \,dx \notag \right)\notag\\
&=\frac{1}{2\pi^{3/2}}\int_{-\infty}^\infty
e^{itx}\int_{-\infty}^\infty
e^{-\frac{s^2}{2}+ixs}\int_{-\infty}^\infty
e^{-ixw-\frac{w^2}{2}}\,dw\,ds \,dx  \notag \  ,
\end{align}
which, using  (see (\ref{i1})) the formula
\begin{equation}
\int_{-\infty}^\infty
e^{-ixw-\frac{w^2}{2}}\,dw=\int_{-\infty}^\infty
e^{ixs-\frac{s^2}{2}}\,ds=\sqrt{2\pi}e^{-\frac{x^2}{2}} \  ,
\end{equation}
becomes,
\begin{align}
&\langle  e^{itP} \rangle=\frac{1}{\pi
\sqrt{2}}\int_{-\infty}^\infty
e^{itx-\frac{x^2}{2}}\int_{-\infty}^\infty
e^{ixs-\frac{s^2}{2}}\,ds \,dx \\
&=\frac{1}{ \sqrt{\pi}}\int_{-\infty}^\infty
e^{itx-\frac{x^2}{2}}e^{-\frac{x^2}{2}} \,dx =\frac{1}{
\sqrt{\pi}}\int_{-\infty}^\infty
e^{itx-x^2} \,dx \notag\\
&=\frac{1}{ \sqrt{\pi}}
\sqrt{\pi}e^{-\frac{t^2}{4}}=e^{-\frac{t^2}{4}}\notag\  .
\end{align}
\end{proof}

\begin{remark} \rm

In Propositions 9.1 and 9.2 of  \cite{BFspec}, the calculation at
the end of their proofs should read,
\begin{equation}
\int_{\mathbb{R}}e^{it\lambda}\,d\langle\Phi, E_{\lambda}\Phi
\rangle =\int_{\mathbb{R}}e^{it\lambda}\,\frac{1}{\sqrt{\pi}}
e^{-\lambda^2 } \,d\lambda
=\frac{1}{\sqrt{\pi}}\int_{\mathbb{R}}e^{it\lambda-\lambda^2}
 \,d\lambda
= e^{-\frac{t^2}{4}} \   ,
\end{equation}
which is the characteristic function of both $X$ and $P$.

\end{remark}

\section{The Vacuum Characteristic Function of $X+P$ }

\begin{theorem}\label{sa} The operator $X+P$ is essentially self-adjoint on  $\Omega$.
\end{theorem}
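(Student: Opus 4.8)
The plan is to establish essential self-adjointness through the standard deficiency-index criterion. Since $X$ and $P$ are each symmetric on their domains, $X+P$ is symmetric on the dense domain $\Omega=\mathrm{dom}(X)\cap\mathrm{dom}(P)$, so by von Neumann's criterion it is essentially self-adjoint precisely when the deficiency subspaces $\mathcal N_\pm:=\ker\bigl((X+P)^*\mp i\bigr)$ are both trivial. Thus it suffices to show that the only $\psi\in L^2(\mathbb R,\mathbb C)$ with $\langle (X+P)f,\psi\rangle=\pm i\,\langle f,\psi\rangle$ for all $f\in\Omega$ is $\psi=0$.

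First I would reduce the defining condition to a differential equation, for which it is enough to test against $f\in C_0^\infty(\mathbb R)\subseteq\Omega$. Writing $(X+P)f(x)=xf(x)-if'(x)$ and using $\overline{-if'}=i\overline{f'}$, one computes $\langle(X+P)f,\psi\rangle=\int x\overline f\,\psi\,dx+i\int\overline{f'}\,\psi\,dx$; integrating by parts in the second term (the boundary terms vanishing since $f$ has compact support) gives $\langle(X+P)f,\psi\rangle=\int\overline f\,(x\psi-i\psi')\,dx$. Setting this equal to $\pm i\int\overline f\,\psi\,dx$ for every $f\in C_0^\infty(\mathbb R)$ yields, in the distributional sense, $x\psi-i\psi'=\pm i\,\psi$, equivalently $\psi'=(\mp 1-ix)\psi$.

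Next I would solve this first-order ODE and test integrability. A routine bootstrap shows any distributional solution is absolutely continuous: since $\psi\in L^2\subset L^1_{\mathrm{loc}}$, the right-hand side $(\mp1-ix)\psi$ lies in $L^1_{\mathrm{loc}}$, so $\psi'\in L^1_{\mathrm{loc}}$ and $\psi$ is (a.e.\ equal to) an absolutely continuous function satisfying the equation almost everywhere. By uniqueness for linear first-order equations, $\psi(x)=C\exp\bigl(\mp x-\tfrac{i}{2}x^2\bigr)$ for some constant $C$, so $|\psi(x)|=|C|\,e^{\mp x}$. The upper sign forces growth like $e^{-x}$ as $x\to-\infty$ and the lower sign growth like $e^{x}$ as $x\to+\infty$; in either case $\psi\in L^2(\mathbb R,\mathbb C)$ forces $C=0$. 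Hence $\mathcal N_+=\mathcal N_-=\{0\}$, the deficiency indices are $(0,0)$, and $X+P$ is essentially self-adjoint on $\Omega$.

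The step requiring care is the regularity argument—namely that the distributional deficiency equation admits only classical (absolutely continuous) solutions—together with the clean passage from the adjoint condition to the pointwise ODE; once these are in hand the integrability check is immediate. I would also note an alternative route via a metaplectic unitary $U$ implementing the phase-space rotation $(X,P)\mapsto\bigl(\tfrac{X+P}{\sqrt2},\tfrac{P-X}{\sqrt2}\bigr)$, under which $X+P=\sqrt2\,U^*XU$, so that essential self-adjointness of $X+P$ would follow from that of $\sqrt2\,X$ on the image of $\Omega$; however, the deficiency-index computation above is more elementary and self-contained.
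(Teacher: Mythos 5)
Your proof is correct, but it takes a genuinely different (and complementary) route from the paper's. You verify essential self-adjointness through the kernel form of von Neumann's criterion, showing $\ker\left((X+P)^*\mp i\right)=\{0\}$ by testing against $C_0^\infty(\mathbb{R})$, integrating by parts to get the distributional equation $\psi'=(\mp 1-ix)\psi$, bootstrapping regularity, and discarding the explicit solution $\psi(x)=C\,e^{\mp x-ix^2/2}$ on integrability grounds. The paper instead uses the equivalent range form of the criterion: it shows that $(X+P)f\pm if=g$ is solvable with $f\in\Omega$ for every $g$ in the dense set $C_0^{\infty}(\mathbb{R})$, by writing $f=-R(\pm i;X+P)g$ explicitly via the resolvent formula from \cite{BFspec} and then carrying out a sequence of $L^2$ estimates to confirm $f\in{\rm dom}(X)\cap{\rm dom}(P)$ and that $f$ is absolutely continuous; its ``uniqueness'' step solves the same first-order ODE you do, but for $f\in\Omega$ rather than for a vector in the domain of the adjoint. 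Since $\ker\left((X+P)^*\mp i\right)=\overline{{\rm ran}\left((X+P)\pm i\right)}^{\perp}$, the two criteria are equivalent; your version is shorter and avoids all the explicit domain verification, at the cost of needing the (standard, and correctly handled) regularity argument that a distributional solution of the deficiency equation is absolutely continuous. The paper's version buys an explicit solution formula that it reuses later. Your closing remark about a unitary conjugation is also consonant with the paper, which in Theorem \ref{X+Pna} uses the gauge transformation $(V\psi)(s)=e^{is^2/2}\psi(s)$ satisfying $V^*PV=X+P$; that observation could likewise be used to transport essential self-adjointness from $P$, as you suggest.
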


\begin{proof} Let $T=X+P$ and let $f,g
\in \Omega$.  Since $X, P$ are self-adjoint
\begin{align*}
\langle Tf, g\rangle=& \langle Xf, g \rangle+ \langle Pf, g
\rangle =\langle f, Xg \rangle+ \langle f, Pg \rangle\\
=&\langle f, Xg \rangle+ \langle f, Pg \rangle =\langle f, (X+P)g
\rangle =\langle f, Tg \rangle \ ,
\end{align*}
so $T$ is symmetric, i.e., $T\subseteq T^* $ on $\Omega$. To show
that $T$ is essentially self-adjoint, it suffices (see
\cite{Richtmyer} p.157) to show that the equation
\begin{equation}\label{vn}
(X+P)f\pm i f=g \  ,
\end{equation}
has a solution $f\in \Omega$ for each $g$ in a dense subset of
$L^2(\mathbb{R},\mathbb{C})$. For $g \in
C_0^{\infty}(\mathbb{R})$, using the formula for the resolvent of
$X+P$ obtained in Theorem 9.3 of \cite{BFspec}, see also formula
(\ref{rr}) below, we find that
\begin{equation}\label{rrr}
f(s)=-R(z;X+P)g(s)=\left\{
\begin{array}{llr}
 i\int_{-\infty}^s e^{i\frac{t^2-s^2}{2}}e^{t-s}
g(t) \,dt , &\,\,\mbox{ for  } z=i \\
& \\
-i\int^{\infty}_s e^{i\frac{t^2-s^2}{2}}e^{s-t}g(t) \,dt
,&\,\,\mbox{ for  } z=-i
\end{array}
\right.  \ .
\end{equation}
For $z=i$ we can show that $f\in \Omega$ as follows. Let $a=\min
\supp (g)$ and $b=\max \supp (g)$. Then
\begin{equation}
 f(s)=
 i\int_{(-\infty, s]\cap [a, b]} e^{i\frac{t^2-s^2}{2}}e^{t-s}
g(t) \,dt \  ,
\end{equation}
and
\begin{equation}\label{de}
 \int_\mathbb{R} |f(s)|^2\,ds=
 \int_{(-\infty, a)} |f(s)|^2 \,ds + \int_{[a, b]} |f(s)|^2 \,ds+ \int_{(b, \infty )} |f(s)|^2 \,ds \
 .
\end{equation}
The first integral on the right hand side of (\ref{de}) is equal
to zero, while for the second integral, using the
Cauchy--Bunyakovsky--Schwarz inequality, we have
\begin{align}
 \int_{[a, b]} |f(s)|^2 \,ds=&\int_a^b \left|\int_a^s e^{i\frac{t^2-s^2}{2}}e^{t-s}
g(t) \,dt  \right|^2 \, ds \leq \int_a^b \left( \int_a^s e^{t-s}
|g(t)| \,dt   \right)^2\,ds \\
 \leq & \int_a^b \left( \int_a^s e^{2(t-s)}\,dt   \right) \left(
\int_a^s|g(t)|^2 \,dt   \right)\,ds  \notag   \\
\leq&\|g\|^2 \int_a^b \int_a^s e^{2(t-s)}\,dt\,ds \leq \|g\|^2
\int_a^b \int_a^s 1\,dt\,ds \notag \\
=&\frac{1}{2} (b-a)^2 \|g\|^2<\infty \notag  \ .
\end{align}
Similarly, for the third integral on the right hand side of
(\ref{de}),
\begin{align}
 \int_{(b, \infty)} |f(s)|^2 \,ds
\leq&\|g\|^2 \int_b^{\infty} \int_a^b e^{2(t-s)}\,dt\,ds \\
\leq&\frac{1}{4} \left( 1-e^{2(a-b)} \right)\|g\|^2<\infty \notag
\ .
\end{align}
Thus $f\in L^2(\mathbb{R},\mathbb{C})$. To show that
$\int_{\mathbb{R}}s^2\,|f(s)|^2\,ds<\infty$, the corresponding
bounds are
\begin{align}
\int_{[a, b]} s^2 |f(s)|^2 \,ds \leq &\left(
\frac{1}{24}(4b^3-4a^3-6a^2-6a-3)\right.
\\
&\left.+\frac{1}{8}(1+2b+2b^2)\left( 1-e^{2(a-b)} \right)
\right)\|g\|^2
<\infty  \notag\ ,\\
 \int_{(b, \infty)}s^2 |f(s)|^2 \,ds
\leq&\frac{1}{8}(1+2b+2b^2) \left( 1-e^{2(a-b)}
\right)\|g\|^2<\infty \ .
\end{align}
To show that $\int_{\mathbb{R}}|f^\prime(s)|^2\,ds<\infty$, where
\begin{equation}
 f^\prime(s)=i g(s)+(s-1)
 \int_{-\infty}^s e^{i\frac{t^2-s^2}{2}}e^{t-s}
g(t) \,dt \  ,
\end{equation}
with
\begin{equation}
\int_{\mathbb{R}}|f^\prime(s)|^2\,ds \leq 2 \|g\|^2 +2
\int_{\mathbb{R}}|G(s)|^2\,ds \  ,
\end{equation}
where,
\begin{equation}
G(s)=\int_{-\infty}^s (s-1)e^{i\frac{t^2-s^2}{2}}e^{t-s} g(t) \,dt
\ ,
\end{equation}
using the decomposition (\ref{de}) for $G$,
 the corresponding bounds are
\begin{align}
&\int_{(-\infty, a)} |G(s)|^2 \,ds = 0 \ ,\\
\int_{[a, b]} |G(s)|^2 \,ds \leq &\|g\|^2 \int_a^b \int_a^s (s-1)^2e^{2(t-s)}\,dt\,ds  \\
  = & \left(
\frac{1}{24}(4b^3-12b^2+12b -4a^3+6a^2-6a-3)\right. \notag\\
 &\left.+\frac{1}{8}(1-2b+2b^2)\left( 1-e^{2(a-b)} \right)
\right)\|g\|^2
<\infty \notag  \ ,\\
 \int_{(b, \infty)}|G(s)|^2 \,ds
\leq& \frac{1}{8}(1-2b+2b^2)\left( 1-e^{2(a-b)}\right) \|g\|^2
<\infty \ .
\end{align}
The absolute continuity of $f$ follows from the fact that
$f^{\prime}$ is in $L^2(\mathbb{R},\mathbb{C})$ as a distribution,
since, for all $\phi \in C_0^{\infty}(\mathbb{R})$, by the
Cauchy--Bunyakovsky--Schwarz inequality,
\begin{equation}
|\langle f^{\prime}, \phi \rangle|
=\left|\int_{\mathbb{R}}f^{\prime}(s) \phi(s)\,ds\right|
 \leq
 M\, \|\phi \| \,,\, M= \| f^\prime\|<\infty   \ .
\end{equation}

 To show that the solution of (\ref{vn}) is unique, it
suffices to show that the equation
\begin{equation}\label{vn2}
(X+P)f\pm i f=0\,,\,f\in \Omega \  ,
\end{equation}
has  the unique solution $f=0$. Using the definition of $X$ and
$P$,  (\ref{vn2}) becomes
\begin{equation}\label{vn3}
(s\pm  i)f(s)-i f^\prime(s)=0 \,,\, s\in\mathbb{R}\  ,
\end{equation}
which, letting $a={\rm Re }f$ and $b={\rm Im }f $, gives the
system
\begin{align}
s \,a(s)\mp b(s)+b^\prime (s)&=0\label{ab}\ ,\\
s \,b(s)\pm a(s)-a^\prime (s)&=0\label{ba} \ .
\end{align}
Multiplying  (\ref{ab}) and (\ref{ba}) by $b(s)$ and $a(s)$,
respectively, and subtracting the resulting equations, we obtain
\begin{equation}\label{vn4}
\left(a(s)^2+b(s)^2\right)^\prime= \pm 2
\left(a(s)^2+b(s)^2\right)\  ,
\end{equation}
which implies that
\begin{equation}
 |f(s)|^2=a(s)^2+b(s)^2=c e^{\pm 2 s} \, , \, c\in\mathbb{R}\  ,
\end{equation}
which vanishes at $\pm \infty $ if and only if $c=0$. Thus $f=0$.
The proof for $z=-i$ is similar.
\end{proof}

\begin{theorem}\label{X+Pna}  The operator  $X+P$ has only continuous spectrum consisting
 of the entire
real line. Moreover, for $g$ in the range of $z-(X+P)$, the
resolvent operator $R(z; X+P)$ is
\begin{equation}\label{rr}
R(z;X+P)g(s)=\left\{
\begin{array}{llr}
 -i\int_{-\infty}^s e^{i\frac{t^2-s^2}{2}}e^{-iz(t-s)}
g(t) \,dt , &\,\, {\rm Im}\, z> 0 \\
& \\
i\int^{\infty}_s e^{i\frac{t^2-s^2}{2}}e^{-iz(t-s)}g(t) \,dt
,&\,\, {\rm Im}\, z<0
\end{array}
\right. \ .
\end{equation}
\end{theorem}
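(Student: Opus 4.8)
The plan is to follow the template already used in Theorems \ref{sX} and \ref{Pna}. Since $X+P$ is essentially self-adjoint on $\Omega$ by Theorem \ref{sa}, its closure is self-adjoint, so its spectrum is contained in $\mathbb{R}$ and, by Theorem 1 of \cite{Richtmyer}, its residual spectrum is empty. It therefore suffices to establish three things: that $X+P$ has no point spectrum; that $R(z;X+P)$ fails to be bounded at every real $z$, so that every real $z$ lies in the continuous spectrum; and that for $z\in\mathbb{C}\setminus\mathbb{R}$ the resolvent is given by (\ref{rr}).

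For the absence of point spectrum I would solve the eigenvalue equation $(X+P)G=zG$, which with $\hslash=1$ reads $sG(s)-iG'(s)=zG(s)$, i.e. the first-order linear ODE $G'(s)=i(z-s)G(s)$. Its solutions are $G(s)=c\,e^{izs-is^2/2}$, and for real $z$ we have $|G(s)|=|c|$, a constant, so $G\in L^2(\mathbb{R},\mathbb{C})$ forces $c=0$. Hence $X+P$ has no eigenvalues.

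To locate the continuous spectrum I would reuse the Weyl-sequence idea from the proof of Theorem \ref{Pna}, taking the Gaussian-damped version of the formal eigenfunction,
\[
g_\beta(s)=\beta^{1/4}e^{-\beta s^2}e^{izs-is^2/2}\,,\,\beta>0\ .
\]
A direct differentiation gives the clean identity $(z-(X+P))g_\beta(s)=-2i\beta s\,g_\beta(s)$, whence, using (\ref{i1}), $\|(z-(X+P))g_\beta\|^2=\beta\sqrt{\pi/2}\to0$ as $\beta\to0^+$ while $\|g_\beta\|^2=\sqrt{\pi/2}$. Thus, setting $f_\beta=(z-(X+P))g_\beta$, no $M\geq0$ can satisfy $\|R(z;X+P)f_\beta\|\leq M\|f_\beta\|$ for all $\beta>0$, so $R(z;X+P)$ is unbounded at each real $z$; the residual spectrum being empty, every $z\in\mathbb{R}$ lies in the continuous spectrum.

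For the resolvent formula (\ref{rr}) (already obtained in Theorem 9.3 of \cite{BFspec}) I would solve $(z-(X+P))G=g$, i.e. $iG'(s)+(z-s)G(s)=g(s)$, by the integrating factor $\mu(s)=e^{is^2/2-izs}$, which turns the equation into $(\mu G)'=-i\mu g$. Writing $z=a+ib$ one has $|\mu(s)|=e^{bs}$, and this is exactly what selects the limits of integration: for ${\rm Im}\,z=b>0$, integrating from $-\infty$ gives
\[
G(s)=-i\,\mu(s)^{-1}\int_{-\infty}^s\mu(t)g(t)\,dt=-i\int_{-\infty}^s e^{i\frac{t^2-s^2}{2}}e^{-iz(t-s)}g(t)\,dt\ ,
\]
which is the first branch of (\ref{rr}); for $b<0$, integrating from $+\infty$ gives the second branch. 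The step I expect to be most delicate is verifying that these are the \emph{unique} $L^2$ solutions and that the boundary terms in the integration genuinely vanish: the homogeneous solution $c\,e^{izs-is^2/2}$ has modulus $|c|\,e^{-bs}$, which is not square-integrable for any $c\neq0$ (it blows up at $-\infty$ when $b>0$ and at $+\infty$ when $b<0$), so no homogeneous term may be added, while the chosen branch vanishes identically beyond the support of a test function $g$ and decays at the appropriate infinity. Checking membership of $G$ in ${\rm dom}(X+P)$ and confirming the vanishing of the boundary contributions is then the routine but careful part, entirely parallel to the treatment of $P$ in (\ref{res}).
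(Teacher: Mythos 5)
Your proposal is correct, and the spectral part is essentially the paper's argument: the eigenvalue ODE giving $G(s)=c\,e^{izs-is^2/2}$ and hence no point spectrum is identical, and your Weyl sequence $g_\beta$ is the paper's approximate eigenvector $g_\varepsilon(s)=\varepsilon^{1/2}\pi^{-1/4}e^{-\varepsilon^2s^2/2}e^{i(sz-s^2/2)}$ up to normalization and the reparametrization $\beta=\varepsilon^2/2$ (your computation $\|(z-(X+P))g_\beta\|^2=\beta\sqrt{\pi/2}$, $\|g_\beta\|^2=\sqrt{\pi/2}$ checks out). There are two genuine divergences. First, the paper does not stop at ``residual spectrum empty'': it also proves directly that the range of $z-(X+P)$ is dense for real $z$, by conjugating with the unitary $(V\psi)(s)=e^{is^2/2}\psi(s)$, which satisfies $V^*PV=X+P$, and transporting the density statement already established for $P$ in Theorem \ref{Pna}. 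Your shortcut via Theorem 1 of \cite{Richtmyer} is logically sufficient (it is exactly what the paper itself does for $X$ and $P$), but the unitary-conjugation trick is worth knowing since it reduces every question about $X+P$ to the corresponding question about $P$ and could have replaced your Weyl-sequence computation entirely. Second, for formula (\ref{rr}) the paper merely cites Theorem 9.3 of \cite{BFspec}, whereas you rederive it; your integrating factor $\mu(s)=e^{is^2/2-izs}$ with $|\mu(s)|=e^{s\,{\rm Im}\,z}$ correctly explains why ${\rm Im}\,z>0$ forces integration from $-\infty$ and ${\rm Im}\,z<0$ from $+\infty$, and your observation that the homogeneous solution has modulus $|c|e^{-s\,{\rm Im}\,z}$ and so cannot be added settles uniqueness. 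This makes your write-up more self-contained than the paper's on that point; the only detail you flag but do not carry out --- verifying that the resulting $G$ lies in the domain and that the boundary terms vanish for general $g$ in the range, not just compactly supported $g$ --- is precisely what the citation to \cite{BFspec} is covering.
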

\begin{proof} Formula (\ref{rr}) was proved in \cite{BFspec}. Since, by Theorem \ref{sa},
 $X+P$ is essentially self-adjoint, its spectrum is real. If $z\in\mathbb{R}$ is an
 eigenvalue of $X+P$, then a corresponding
eigenfunction $G$ would satisfy
\begin{equation}
i\,G^{\prime}(s)+(z-s)G(s)=0 \,,\, s\in\mathbb{R}\  ,
\end{equation}
which would imply that
\begin{equation}
G(s)=c e^{izs}e^{-i\frac{s^2}{2}}\, , c\in\mathbb{C}\  ,
\end{equation}
which is in $L^2(\mathbb{R},\mathbb{C})$ if and only if $c=0$,
i.e., if and only if $G=0$. Thus $X+P$ has no point spectrum.

To show that every real number $z$ is in the continuous spectrum
of $X+P$, we will show that the range of $z-(X+P)$ is dense in
$L^2(\mathbb{R},\mathbb{C})$ for every $z\in\mathbb{R}$ and that
for every  $\varepsilon >0$  there exists an \textit{approximate
eigenvector}  $g_\varepsilon$  of $z-(X+P)$  (see \cite{Richtmyer}
p.144).

For real $z$, let $ \phi\in \mathcal{S}$ and consider the unitary
transformation
\begin{equation}
V: L^2(\mathbb{R},\mathbb{C}) \to L^2(\mathbb{R},\mathbb{C})\,,\,
(V\psi)(s)=e^{\frac{is^2}{2}}\psi(s)\,,\,(V^*\psi)(s)=e^{-\frac{is^2}{2}}\psi(s)\
,
\end{equation}
which satisfies (the proof is along the lines of Remark \ref{rem})
\begin{equation}
V^*PV\psi=(X+P)\psi\,,\,\psi\in\mathcal{S} \ .
\end{equation}
 Let $g=V\phi$, let $g_n$ be as in (\ref{tb}) with
$(z-P)g_n\to g$, and let $\phi_n=V^*g_n$. Then  $(z-P)V\phi_n\to
V\phi$ implies $V^*(z-P)V\phi_n\to \phi$, i.e., $(z-(X+P))\phi_n
\to \phi$ in the $L^2$-sense. Thus,  the range of $z-(X+P)$ is
dense in $\mathcal{S}$ therefore dense in
$L^2(\mathbb{R},\mathbb{C})$.

For an arbitrary $\varepsilon >0$, we consider  the function
\begin{equation}
g_\varepsilon(s)=\varepsilon^{1/2} \pi^{-1/4}
 e^{-\frac{\varepsilon^2s^2 }{2}}e^{i ( sz-\frac{s^2}{2}) }\  ,
\end{equation}
with
\begin{equation}
g_\varepsilon^{\prime}(s)=\varepsilon^{1/2}
\pi^{-1/4}(-s\varepsilon^2+i(z-s))
 e^{-\frac{\varepsilon^2s^2 }{2}}e^{i ( sz-\frac{s^2}{2}) }  \ .
\end{equation}
 Then,
\begin{align}
\|g_\varepsilon\|&=1 \ ,\\
\int_{\mathbb{R}} s^2 |g_\varepsilon(s)|^2 \,ds&=\frac{1}{2\varepsilon^2}< \infty \  ,\\
\int_{\mathbb{R}}
|g_\varepsilon^{\prime}(s)|^2\,ds&=\frac{1}{2\varepsilon^2}+\frac{\varepsilon^2}{2}+z^2
< \infty\  ,
\end{align}
so $g_\varepsilon$ is a unit vector in the domain of $X$ and $P$.
The absolute continuity of $g_\varepsilon$ follows from the fact
that $g_\varepsilon^{\prime}$ is in $L^2(\mathbb{R},\mathbb{C})$,
as a distribution, since, for all $\phi \in
C_0^{\infty}(\mathbb{R})$,
\begin{align}
&|\langle g_\varepsilon^{\prime}, \phi \rangle| =\left|\int_{\mathbb{R}}g_\varepsilon^{\prime}(s) \phi(s)\,ds\right|\\
\notag
&=\left|\int_{\mathbb{R}}\left(\varepsilon^{1/2}
\pi^{-1/4}(-s\varepsilon^2+i(z-s))
 e^{-\frac{\varepsilon^2s^2 }{2}}e^{i ( sz-\frac{s^2}{2}) }  \right) \phi(s)\,ds\right|
 \\\notag
 &\leq \varepsilon^{1/2}
\pi^{-1/4} \int_{\mathbb{R}} \sqrt{s^2 \varepsilon^4+(z-s)^2}
e^{-\frac{\varepsilon^2s^2 }{2}} |\phi(s)| \,ds \\ \notag
 &\leq
\varepsilon^{1/2} \pi^{-1/4} \left(\int_{\mathbb{R}} \left(s^2
\varepsilon^4+(z-s)^2\right) e^{-\varepsilon^2 s^2 }  \,ds
\right)^{1/2} \, \left(\int_{\mathbb{R}} |\phi(s)|^2\,ds
\right)^{1/2} \\\notag
 & = \varepsilon^{1/2} \pi^{-1/4}  \left( \frac{ \pi^{1/2}  (1+2 z^2 \varepsilon^2+\varepsilon^4  }{2 \varepsilon^3  }\right)^{1/2}
 \|\phi\| \\\notag
 &=\left( \frac{1}{2 \varepsilon^2}+\frac{ \varepsilon^2 }{2}+z^2 \right)^{1/2} \|\phi\| \
 .
\end{align}
 Moreover,
\begin{equation}
z g_\varepsilon(s) -s g_\varepsilon(s) +i
g_\varepsilon^{\prime}(s)=-i \varepsilon^{5/2} \pi^{-1/4}
e^{isz}e^{-\frac{i s^2 }{2}} e^{-\frac{\varepsilon^{2} s^2 }{2} }\
,
\end{equation}
implies
\begin{equation}
\int_{\mathbb{R}} |z g_\varepsilon(s) -s g_\varepsilon(s) +i
g_\varepsilon^{\prime}(s)|^2\,ds=\varepsilon^{5}
\pi^{-1/2}\int_{\mathbb{R}} s^2  e^{-\varepsilon^{2} s^2 } \,ds
=\frac{\varepsilon^2}{2} < \varepsilon^2\  ,
\end{equation}
which means that
\begin{equation}
\|\left(z-(X+P)\right)g_\varepsilon\| < \varepsilon\  ,
\end{equation}
i.e., $g_\varepsilon$ is an \textit{approximate eigenvector} of
$X+P$, so (see \cite{Richtmyer} p.144)  $z$ is in the continuous
spectrum of $X+P$.
\end{proof}

\begin{theorem}\label{X+P} For $t\in \mathbb{R}$, the vacuum characteristic function
 of  $X+P$ is
\begin{equation}
\langle  e^{i t (X+P)}  \rangle=
 e^{-\frac{t^2}{2}} \ .
\end{equation}
\end{theorem}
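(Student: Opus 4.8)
The plan is to run the contour argument of Theorem~\ref{P} essentially verbatim. By Theorem~\ref{X+Pna} the resolvent (\ref{rr}) of $X+P$ has the very same upper/lower half-plane structure as the resolvent of $P$, and the same failure of analyticity across the real axis, so again I cannot close a single contour. I would therefore \emph{define} the vacuum characteristic function through the two semicircular paths $C^+(\varepsilon,r)$ and $C^-(\varepsilon,r)$ exactly as in (\ref{cff}), with $R(z;P)$ replaced by $R(z;X+P)$, close each arc with the horizontal segment $a^+(\varepsilon,r)$, resp.\ $a^-(\varepsilon,r)$, lying just above, resp.\ below, the real axis, and invoke Cauchy's theorem: since $e^{itz}\langle\Phi,R(z;X+P)\Phi\rangle$ is analytic inside $\gamma^\pm=C^\pm+a^\pm$, each arc integral equals minus the corresponding segment integral. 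The whole computation thus reduces to evaluating $\lim_{r\to\infty}\lim_{\varepsilon\to0^+}$ of the two segment integrals.

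Next I would substitute $\Phi$ from (\ref{phi}) into (\ref{rr}) and compute $\langle\Phi,R(z;X+P)\Phi\rangle$, renaming the resolvent's dummy variable to $w$ to avoid clashing with the time $t$. On $a^+$ (where $\mathrm{Im}\,z>0$) the resolvent contributes the inner integral $\int_{-\infty}^s$, while on $a^-$ (where $\mathrm{Im}\,z<0$, and the segment is traversed in the reverse direction, producing the compensating sign exactly as for $P$) it contributes $\int_s^\infty$; both carry the same prefactor $-i\pi^{-1/2}$, so they add constructively and $\int_{-\infty}^s+\int_s^\infty$ fills the whole line for each fixed $s$. Collecting the $-\tfrac{1}{2\pi i}$ factor from the definition, I expect to arrive at
\[
\langle e^{it(X+P)}\rangle=\frac{1}{2\pi^{3/2}}\int_{-\infty}^\infty e^{itx}\Bigl(\int_{-\infty}^\infty e^{-\frac{s^2}{2}(1+i)+ixs}\,ds\Bigr)\Bigl(\int_{-\infty}^\infty e^{-\frac{w^2}{2}(1-i)-ixw}\,dw\Bigr)dx,
\]
the only new feature relative to the $P$ case being the extra phase $e^{i(w^2-s^2)/2}$ coming from the $e^{i(t^2-s^2)/2}$ factor in (\ref{rr}).

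The heart of the calculation is then the two inner Gaussian integrals, which I would evaluate with (\ref{i1}) in its complex-$\epsilon$ form, legitimate because the relevant parameters $\epsilon=\tfrac{1\pm i}{2}$ have positive real part $\tfrac12$. The key cancellation is that the two Gaussian exponents combine as $-\tfrac{x^2}{2}\bigl(\tfrac{1}{1+i}+\tfrac{1}{1-i}\bigr)=-\tfrac{x^2}{2}$, while the two prefactors multiply to $\sqrt{(2\pi)^2/\bigl((1+i)(1-i)\bigr)}=\pi\sqrt2$, so the double integral collapses to $\pi\sqrt2\,e^{-x^2/2}$. This is precisely where $X+P$ departs from $P$: the imaginary parts do not annihilate the whole quadratic but only halve it, leaving $e^{-x^2/2}$ in place of the $e^{-x^2}$ of the $P$ computation, which is what upgrades $e^{-t^2/4}$ to $e^{-t^2/2}$. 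A final application of (\ref{i1}), giving $\tfrac{1}{\sqrt{2\pi}}\int_{-\infty}^\infty e^{itx-x^2/2}\,dx=e^{-t^2/2}$, finishes the proof.

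The main obstacle, as in Theorems~\ref{P} and \ref{X+Pna}, is not the algebra but the analytic bookkeeping: justifying that the arc integrals over $C^\pm$ may be replaced by the segment integrals and that the iterated limits $\lim_{r\to\infty}\lim_{\varepsilon\to0^+}$ may be passed inside the triple integral. I would handle the limit interchange by Lebesgue's bounded convergence theorem, exactly as flagged after (\ref{tw}) and as done in Theorem~9.1 of \cite{BFspec}, using the Gaussian decay in $s$ and $w$ to dominate the integrand uniformly in $\varepsilon$ and $r$; the reordering of the $s$, $w$, $x$ integrations is then legitimate by Fubini. One should also verify throughout that the complex-$\epsilon$ Gaussian formula is applied only with $\mathrm{Re}\,\epsilon>0$, which holds at every step.
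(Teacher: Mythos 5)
Your proposal is correct and follows essentially the same route as the paper's proof: the same semicircular contours closed by horizontal segments $a^{\pm}(\varepsilon,r)$, the same merging of the $\int_{-\infty}^{s}$ and $\int_{s}^{\infty}$ pieces into a full-line integral, and the same application of (\ref{i1}) with $\epsilon=\tfrac{1\pm i}{2}$, whose exponents combine to $-\tfrac{x^2}{2}$ and whose prefactors multiply to $\pi\sqrt{2}$, yielding $e^{-t^2/2}$. The intermediate triple-integral expression and all constants match the paper's computation exactly.
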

\begin{proof} As in the proof of Theorem \ref{P}, using formula
(\ref{rr}) for the resolvent of $T:=X+P$ and integration formula
(\ref{i1}), we have
\begin{align}
&\langle  e^{itT} \rangle=
-\frac{1}{2\pi i}\lim_{r\to\infty}\lim_{\varepsilon\to 0^+}\\
&\left( \oint_{a^+({\varepsilon, r})} e^{itz}\langle \Phi,
R(z;T)\Phi \rangle \,dz+\oint_{a^-({\varepsilon, r})}
e^{itz}\langle \Phi,
R(z;T)\Phi \rangle \,dz\right)\notag\\
&=-\frac{1}{2\pi i}\lim_{r\to\infty}\lim_{\varepsilon\to
0^+}\left(\int_{-\sqrt{r^2-\varepsilon^2}}^{\sqrt{r^2-\varepsilon^2}}
e^{it(x+i\varepsilon)}\langle \Phi,
R(x+i\varepsilon;T)\Phi \rangle \,dx \right.\notag\\
&\left.+\int_{\sqrt{r^2-\varepsilon^2}}^{-\sqrt{r^2-\varepsilon^2}}
e^{it(x-i\varepsilon)}\langle \Phi,
R(x-i\varepsilon;T)\Phi \rangle \,dx \right)\notag\\
&=-\frac{1}{2\pi i}\lim_{r\to\infty}\lim_{\varepsilon\to
0^+}\left(\int_{-\sqrt{r^2-\varepsilon^2}}^{\sqrt{r^2-\varepsilon^2}}
e^{it(x+i\varepsilon)}\int_{-\infty}^\infty\Phi(s)
R(x+i\varepsilon;T)\Phi(s)\,ds \,dx \right.\notag\\
&\left.+\int_{\sqrt{r^2-\varepsilon^2}}^{-\sqrt{r^2-\varepsilon^2}}
e^{it(x-i\varepsilon)} \int_{-\infty}^\infty\Phi(s)
R(x-i\varepsilon;T)\Phi(s)\,ds \,dx     \right)\notag\\
&=-\frac{1}{2\pi i}\left(-i
\pi^{-1/2}\right)\lim_{r\to\infty}\lim_{\varepsilon\to
0^+}\notag\\
&\left(\int_{-\sqrt{r^2-\varepsilon^2}}^{\sqrt{r^2-\varepsilon^2}}
e^{it(x+i\varepsilon)}\int_{-\infty}^\infty
e^{-\frac{s^2}{2}}\int_{-\infty}^s
e^{i\frac{w^2-s^2}{2}}e^{-i(x+i\varepsilon)(w-s)}e^{-\frac{w^2}{2}}\,
dw \,ds \,dx \right.\notag\\
&\left.
-\int_{\sqrt{r^2-\varepsilon^2}}^{-\sqrt{r^2-\varepsilon^2}}
e^{it(x-i\varepsilon)}\int_{-\infty}^\infty
e^{-\frac{s^2}{2}}\int_s^{\infty}
e^{i\frac{w^2-s^2}{2}}e^{-i(x-i\varepsilon)(w-s)}e^{-\frac{w^2}{2}}\,
dw \,ds \,dx \right)\notag\\
&=-\frac{1}{2\pi i}\left(-i
\pi^{-1/2}\right)\lim_{r\to\infty}\lim_{\varepsilon\to
0^+}\notag\\
&\left(\int_{-\sqrt{r^2-\varepsilon^2}}^{\sqrt{r^2-\varepsilon^2}}
e^{it(x+i\varepsilon)}\int_{-\infty}^\infty
e^{-\frac{s^2}{2}}\int_{-\infty}^s
e^{i\frac{w^2-s^2}{2}}e^{-i(x+i\varepsilon)(w-s)}e^{-\frac{w^2}{2}}\,
dw \,ds \,dx \right.\notag\\
&\left.
+\int_{-\sqrt{r^2-\varepsilon^2}}^{\sqrt{r^2-\varepsilon^2}}
e^{it(x-i\varepsilon)}\int_{-\infty}^\infty
e^{-\frac{s^2}{2}}\int_s^{\infty}
e^{i\frac{w^2-s^2}{2}}e^{-i(x-i\varepsilon)(w-s)}e^{-\frac{w^2}{2}}\,
dw \,ds \,dx \right)\notag\\
&=\frac{1}{2\pi^{3/2}}\left(\int_{-\infty}^{\infty}
e^{itx}\int_{-\infty}^\infty e^{-\frac{s^2}{2}}\int_{-\infty}^s
e^{i\frac{w^2-s^2}{2}}e^{-ix(w-s)}e^{-\frac{w^2}{2}}\,
dw \,ds \,dx \right.\notag\\
&\left. +\int_{-\infty}^{\infty} e^{itx}\int_{-\infty}^\infty
e^{-\frac{s^2}{2}}\int_s^{\infty}
e^{i\frac{w^2-s^2}{2}}e^{-ix(w-s)}e^{-\frac{w^2}{2}}\,
dw \,ds \,dx \right)\notag\\
&=\frac{1}{2\pi^{3/2}}\int_{-\infty}^{\infty}
e^{itx}\int_{-\infty}^\infty
e^{-\frac{s^2}{2}}\int_{-\infty}^\infty
e^{i\frac{w^2-s^2}{2}}e^{-ix(w-s)}e^{-\frac{w^2}{2}}\,
dw \,ds \,dx \notag\\
&=\frac{1}{2\pi^{3/2}} \int_{-\infty}^{\infty}
e^{itx}\left(\int_{-\infty}^\infty
e^{-\frac{1+i}{2}s^2}e^{ixs}\,ds\right)\left(
\int_{-\infty}^\infty e^{-\frac{1-i}{2}w^2}e^{-ixw}\,dw
\right)\,dx\notag\\
&=\frac{1}{2\pi^{3/2}} \int_{-\infty}^{\infty}
e^{itx}\left(\left(\frac{2\pi}{1+i}\right)^{1/2} e^{
-\frac{1-i}{4} x^2 }
\right)\left(\left(\frac{2\pi}{1-i}\right)^{1/2}e^{
-\frac{1+i}{4} x^2 }\right)\,dx\notag\\
&=\frac{1}{\sqrt{2\pi}} \int_{-\infty}^{\infty}e^{itx}e^{
-\frac{x^2}{2}  }\,dx=\frac{1}{\sqrt{2\pi}}\sqrt{2\pi}e^{
-\frac{t^2}{2}  }=e^{ -\frac{t^2}{2}  }\notag \ .
\end{align}
\end{proof}

\section{The Vacuum Characteristic Function of $XP+PX$}

\begin{lemma}\label{den}  The subset $D$ of
$C_{0}^{\infty}(\mathbb{R})$ consisting of bump functions
vanishing in a neighborhood of zero is  dense in
$L^2(\mathbb{R},\mathbb{C})$.
\end{lemma}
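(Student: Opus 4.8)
The plan is to construct, for an arbitrary $f \in L^2(\mathbb{R},\mathbb{C})$ and $\epsilon > 0$, an explicit element of $D$ within distance $\epsilon$ of $f$. I will build the approximation in two independent stages and combine them via the triangle inequality. The first stage handles the \emph{smoothing and truncation}: since $C_0^\infty(\mathbb{R})$ is already dense in $L^2(\mathbb{R},\mathbb{C})$ (a standard fact, being the closure used to define the Schwartz-class and bump-function density already invoked in the Introduction), I may fix a bump function $\psi \in C_0^\infty(\mathbb{R})$ with $\|f - \psi\| < \epsilon/2$. The remaining task is purely to modify $\psi$ so that it vanishes near $0$ while changing its $L^2$-norm by less than $\epsilon/2$.

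The second stage is the heart of the argument: \emph{excising a neighborhood of the origin}. I would introduce a smooth cutoff $\eta_\delta \in C^\infty(\mathbb{R})$ with $0 \le \eta_\delta \le 1$, satisfying $\eta_\delta(x) = 0$ for $|x| \le \delta$ and $\eta_\delta(x) = 1$ for $|x| \ge 2\delta$, obtained by composing a fixed smooth transition function with a rescaling. Then $\psi_\delta := \eta_\delta \, \psi$ lies in $C_0^\infty(\mathbb{R})$ (the product of a smooth function with a compactly supported smooth function is again in $C_0^\infty(\mathbb{R})$), and by construction $\psi_\delta$ vanishes on $[-\delta,\delta]$, so $\psi_\delta \in D$. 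The key estimate is
\begin{equation}
\|\psi - \psi_\delta\|^2 = \frac{1}{\sqrt{\hslash}}\int_{\mathbb{R}} |1 - \eta_\delta(x)|^2\,|\psi(x)|^2\,dx \le \frac{1}{\sqrt{\hslash}}\int_{-2\delta}^{2\delta} |\psi(x)|^2\,dx,
\end{equation}
since $1 - \eta_\delta$ is supported in $[-2\delta, 2\delta]$ and bounded by $1$ there.

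To finish, I observe that the right-hand side tends to $0$ as $\delta \to 0^+$: the integrand $|\psi(x)|^2$ is continuous (indeed bounded, as $\psi$ is a bump function), so by continuity of the Lebesgue integral over shrinking intervals, $\int_{-2\delta}^{2\delta} |\psi(x)|^2\,dx \to 0$. Hence I may choose $\delta$ small enough that $\|\psi - \psi_\delta\| < \epsilon/2$, and then the triangle inequality gives $\|f - \psi_\delta\| \le \|f - \psi\| + \|\psi - \psi_\delta\| < \epsilon$, with $\psi_\delta \in D$. Since $\epsilon$ and $f$ were arbitrary, $D$ is dense in $L^2(\mathbb{R},\mathbb{C})$.

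I do not anticipate a serious obstacle here, as the argument is a routine cutoff construction; the only point requiring mild care is verifying that $\psi_\delta$ genuinely lies in $C_0^\infty(\mathbb{R})$ and vanishes identically near $0$ (both immediate from the properties of $\eta_\delta$), and ensuring the smooth transition function $\eta_\delta$ is correctly normalized so that the excised region is exactly a neighborhood of the origin. The density of $C_0^\infty(\mathbb{R})$ in $L^2(\mathbb{R},\mathbb{C})$, used in the first stage, is standard and already relied upon in the Introduction, so it may be assumed.
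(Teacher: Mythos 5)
Your proof is correct, and it reaches the conclusion by a genuinely different route in its second stage. The paper also begins by approximating $f$ with some $\phi\in C_0^\infty(\mathbb{R})$, but it then excises the origin with the \emph{sharp} cutoff $\phi\chi_{{}_{E}}$, $E=\mathbb{R}\setminus(-\varepsilon,\varepsilon)$, which destroys smoothness, and must therefore restore membership in $C_0^\infty(\mathbb{R})$ by convolving with a mollifier $\psi_\eta$ and checking separately that the mollified function is still compactly supported and still vanishes near $0$. You instead multiply by a \emph{smooth} cutoff $\eta_\delta$ equal to $0$ on $[-\delta,\delta]$ and to $1$ off $[-2\delta,2\delta]$, so that $\eta_\delta\psi$ lands in $D$ immediately and no mollification is needed; the price is only the (standard) construction of the smooth transition function. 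Your version is the more economical of the two: it avoids the mollification step entirely, and in particular sidesteps the one delicate point in the paper's argument, namely that $\phi\chi_{{}_{E}}*\psi_\eta$ actually vanishes only on the slightly smaller interval $(-\varepsilon+\eta,\varepsilon-\eta)$ (still a neighborhood of zero once $\eta<\varepsilon$, but requiring that extra remark). The error estimates are essentially identical in both proofs — an integral of $|\psi|^2$ over a shrinking interval around the origin, controlled by absolute continuity of the Lebesgue integral — and both rest on the same standard input, the density of $C_0^\infty(\mathbb{R})$ in $L^2(\mathbb{R},\mathbb{C})$.
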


\begin{proof} Let $f \in L^2(\mathbb{R},\mathbb{C})$ and let $\delta >0$ be given.
Let $\phi \in
C_{0}^{\infty}(\mathbb{R})$ be such that $\|f-\phi \|<\delta /3$.
Such a $\phi$ exists since  $ C_{0}^{\infty}(\mathbb{R})$ is dense
in $L^2(\mathbb{R},\mathbb{C})$. Let $\varepsilon >0$ be such that
$\|\phi -\phi \chi_{ {}_{E}}\|<\delta/3$, where
$E=\mathbb{R}-(-\varepsilon , \varepsilon )$ and $\chi_{ {}_{E}}$
is the characteristic (indicator) function of $E$. Such an
$\varepsilon>0$ exists because
\begin{equation}
\int_{\mathbb{R}}|\phi(x)-(\phi \chi_{
{}_{E}})(x)|^2\,dx=\int_{(-\varepsilon , \varepsilon
)}|\phi(x)|^2\,dx \to 0\,,\,\varepsilon\to 0^+\  ,
\end{equation}
since the Lebesgue measure of $(-\varepsilon , \varepsilon )$ goes
to zero as $\varepsilon\to 0^+$ (\cite{friedman}, Corollary
2.8.5). Finally, let $\psi_{\eta}$, $\eta>0$,  be a
\textit{mollifier}  such that, $\|\phi \chi_{ {}_{E}}- \phi \chi_{
{}_{E}}*\psi_{\eta}\|<\delta /3$. Such an $\eta>0$ exists since
$\|\phi \chi_{ {}_{E}}- \phi \chi_{ {}_{E}}*\psi_{\eta}\|\to 0$ as
$\eta\to 0^+$ (\cite{Richtmyer} p.93, \cite{folland} Theorem 7.4
p.210)). Then, by the triangle inequality, $\|f-\phi \chi_{
{}_{E}}*\psi_{\eta}\|<\delta$.

The function $\phi \chi_{ {}_{E}}*\psi_{\eta}$ vanishes in
$(-\varepsilon , \varepsilon )$ and it is infinitely
differentiable (since convolution inherits the smoothness
properties of $\psi_{\eta}$). To show that $\phi \chi_{
{}_{E}}*\psi_{\eta}$ is compactly supported,  since $\psi_{\eta}$
is compactly supported , it suffices to show that the support of
 $\phi \chi_{ {}_{E}}$ is compact. Let $K=\supp(\phi \chi_{ {}_{E}})$. Then,
\begin{equation}
K=\overline{ \{x\in \mathbb{R}\,:\, (\phi \chi_{ {}_{E}})(x)\neq 0
\}}=\overline{ \{x\in E\,:\,\phi(x)\neq 0 \}} \subseteq
\supp(\phi) \ .
\end{equation}
Since $\supp(\phi)$ is bounded, $K$ is also bounded, and being a
closure it is closed. Thus $K$ is compact.
\end{proof}

\begin{theorem}\label{sa2} The operator $XP+PX$ is essentially self-adjoint on
 $ \mathcal{S}$.
\end{theorem}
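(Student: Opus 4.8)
The plan is to proceed exactly as in Theorem \ref{sa}, through the basic criterion for essential self-adjointness (\cite{Richtmyer}, p.157): a symmetric, densely defined operator $T$ is essentially self-adjoint if and only if the deficiency equations $T^{*}u=\pm i\,u$ have only the trivial $L^{2}$ solution $u=0$, equivalently the ranges of $T\pm i$ are dense, which is the form used for $X+P$. First I would record the action of $T:=XP+PX$ on $\mathcal{S}$: from $Xf(s)=s\,f(s)$ and $Pf(s)=-i\,f^{\prime}(s)$ one computes
\begin{equation}
(XP+PX)f(s)=-i\bigl(2s\,f^{\prime}(s)+f(s)\bigr)=-2i\Bigl(s\frac{d}{ds}+\frac{1}{2}\Bigr)f(s)\ ,
\end{equation}
so $T$ is, up to the factor $-2i$, the generator of dilations. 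Symmetry on $\mathcal{S}$ follows from a single integration by parts, the boundary term $\bigl[\,2s\,\overline{f(s)}\,g(s)\,\bigr]_{-\infty}^{\infty}$ vanishing because Schwartz functions decay rapidly; this also shows the differential expression $\tau=-2i\bigl(s\frac{d}{ds}+\frac12\bigr)$ is formally self-adjoint, so $T\subseteq T^{*}$ and $T^{*}$ acts as $\tau$ in the distributional sense on $\{u\in L^{2}:\tau u\in L^{2}\}$.

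Next I would solve the two deficiency equations. Writing $T^{*}u=\pm i\,u$ as the first-order ODE $2s\,u^{\prime}(s)+u(s)=\mp u(s)$ on $\mathbb{R}\setminus\{0\}$, the $+i$ case reduces to $s\,u^{\prime}=-u$, whose solutions are $u(s)=c_{\pm}/s$ on the half-lines $\pm s>0$, while the $-i$ case reduces to $u^{\prime}=0$, so $u$ is constant on each half-line. In the first case $1/s\notin L^{2}$ near $s=0$, and in the second a nonzero constant fails to be square-integrable near $\pm\infty$; in both cases square-integrability forces the constants to vanish. Hence $\ker(T^{*}\mp i)=\{0\}$ and $T$ is essentially self-adjoint on $\mathcal{S}$.

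The one point that needs care — and the step I expect to be the main obstacle — is justifying that these classical solutions exhaust the $L^{2}$ solutions of the distributional equation, since the coefficient $2s$ of $u^{\prime}$ degenerates at $s=0$. I would argue that any $u\in L^{2}$ with $\tau u=\pm i\,u$ restricts on each open half-line to a classical solution of the regular ODE there (by interior regularity), so it must agree almost everywhere with $c_{\pm}/s$, respectively a constant; any further distribution solving the equation would be supported at $\{0\}$, hence a finite combination of $\delta$ and its derivatives, none of which lies in $L^{2}$. Since the $L^{2}$ constraint already kills the classical pieces, no nonzero $L^{2}$ solution survives, independently of any matching condition at the origin.

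As a sanity check, $T=-2i\bigl(s\frac{d}{ds}+\frac12\bigr)$ generates the unitary dilation group $(U_{r}f)(s)=e^{r/2}f(e^{r}s)$ on $L^{2}(\mathbb{R})$, which leaves each half-line invariant; conjugating by $s=\pm e^{y}$ turns $T$ into a multiple of the momentum operator on $L^{2}(\mathbb{R},dy)$, whose essential self-adjointness is classical, confirming that both deficiency indices are zero. I would nonetheless keep the direct deficiency computation above as the actual proof, since it avoids having to track the image of $\mathcal{S}$ under the change of variables.
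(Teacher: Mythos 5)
Your proof is correct, but it takes the dual route to the one the paper follows. The paper works with the \emph{range} half of the basic criterion: it shows that $(XP+PX)\pm i$ maps onto a dense set by explicitly solving $(XP+PX)f+zf=g$ for $z=\pm i$ and for every $g$ in the dense class $D$ of bump functions vanishing near the origin (Lemma \ref{den}), using the resolvent formula imported from Theorem 2.2 of \cite{BFspec2} for one sign and a direct integration of $s f^{\prime}(s)=\tfrac{i}{2}g(s)$ for the other, and then checking that the resulting $f$ lies in $\mathcal{S}$. You instead work with the \emph{kernel} half: you compute the deficiency equations $2su^{\prime}=-2u$ and $2su^{\prime}=0$, observe that their solutions $c_{\pm}/s$ (non-$L^{2}$ at $0$) and half-line constants (non-$L^{2}$ at $\pm\infty$) cannot be square-integrable unless zero, and correctly handle the only delicate point, namely that an $L^{2}$ distributional solution must coincide a.e.\ with these classical solutions on each half-line since any extra distributional piece would be supported at $\{0\}$ and hence not in $L^{2}$. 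Your argument is more self-contained and more elementary: it needs neither Lemma \ref{den} nor the resolvent formulas of \cite{BFspec2}, and it sidesteps the somewhat informal verification in the paper that the constructed $f$ belongs to $\mathcal{S}$ (``L'Hopital's rule and the fact that $g$ vanishes near and far from $0$''). What the paper's route buys in exchange is the explicit resolvent representation, which is reused immediately afterwards in Theorem \ref{csp} and in the computation of the characteristic function, whereas your deficiency-index computation establishes essential self-adjointness without producing those formulas. Your closing remark identifying $-2i(s\frac{d}{ds}+\frac12)$ as the dilation generator, unitarily equivalent to momentum on each half-line, is a sound independent confirmation that both deficiency indices vanish.
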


\begin{proof} In Theorem 2.1 of \cite{BFspec2} we showed that
$XP+PX$ admits a self-adjoint extension. To show that the
self-adjoint extension is unique, i.e. that  $XP+PX$ is
essentially self-adjoint, as in the proof of Theorem \ref{sa}, it
suffices to show that $z=\pm i$ is in the resolvent of $XP+PX$,
i.e. that for $z=\pm i$ the equation
\begin{equation}\label{vn22}
(XP+PX)f+z f=g  \ ,
\end{equation}
has a (unique) solution $f\in  \mathcal{S}$ for each $g \in
C_0^{\infty}(\mathbb{R})$ vanishing in a neighborhood
$(-\varepsilon, \varepsilon)$, $\varepsilon >0$,  of $0$. As shown
in Lemma \ref{den}, the set of such $g$'s is dense in
$L^2(\mathbb{R},\mathbb{C})$. For $z=-i$, using the formula for
the resolvent $R(a;XP+PX)g(s)$ for ${\rm Im }a
>-1$ obtained in Theorem 2.2 of \cite{BFspec2}, we find that
\begin{equation}\label{rrrr}
f(s)=-R(i;XP+PX)g(s)=\left\{
\begin{array}{llr}
 -\frac{i}{2}s^{ -1 }\int_s^\infty g(w)\,dw  \   ,&  s> 0   \\
& \\
\frac{i}{2}g(0) \   ,&  s= 0 \\
& \\
\frac{i}{2}s^{ -1 }\int_{-\infty}^s g(w)\,dw\ , & s<0
\end{array}
\right. \  .
\end{equation}
For  $z=i$ the formula for the resolvent $R(a;X+P)g(s)$  in
Theorem 2.2 of \cite{BFspec2} cannot be used as in (\ref{rrrr}) to
compute $f(s)=-R(-i;XP+PX)g(s)$ since $a=-i$ does not satisfy
${\rm Im }a >-1$. Instead, we notice that for  $z=i$, using the
definition of $X$ and $P$, (\ref{vn2}) reduces to
\begin{equation}\label{wtf}
s f^{\prime}(s)=\frac{i}{2} g(s) \ .
\end{equation}
Since $f\in  \mathcal{S}$ implies that $f$ vanishes at $\pm
\infty$, integrating (\ref{wtf}) from $s$ to $+\infty$, for $s>0$,
and from
 $-\infty$ to  $s$, for $s<0$, we find that
\begin{equation}\label{rrrrr}
f(s)=\left\{
\begin{array}{llr}
 -\frac{i}{2}\int_s^\infty \frac{g(w)}{w}\,dw  \   ,&  s> 0   \\
& \\
\frac{i}{2}\int_{-\infty}^s \frac{g(w)}{w}\,dw\ , & s<0
\end{array}
\right. \  .
\end{equation}
For $s=0$, since $g(0)=  0$,  (\ref{wtf}) is satisfied by any
$f\in  \mathcal{S}$.  The proof  that $f\in \mathcal{S}$ follows
from L'Hopital's rule and the fact that $g$ vanishes near and far
from $0$, so that the integrals in (\ref{rrrr}) and (\ref{rrrrr})
vanish for "very positive" and "very negative" values of $s$. We
emphasize the importance of considering $g\in D$, where $D$ is as
in Lemma \ref{den}.  The fact that such an $f$ is unique, proceeds
along the lines of the proof of Theorem \ref{sa} and we omit the
details.
\end{proof}

\begin{theorem}\label{csp}  The operator  $XP+PX$ has only continuous spectrum consisting
of the entire
real line. Moreover, for $g$ in the range of $z-(XP+PX)$, the
resolvent operator $R(z; XP+PX)$, where ${\rm Im} z>-1$, is
\begin{equation}\label{rops}
R(z;XP+PX)g(s)=\left\{
\begin{array}{llr}
 \frac{i}{2}s^{ -\frac{z+i}{2i} }\int_s^\infty w^{ \frac{z-i}{2i}
}g(w)\,dw  \   ,&  s> 0   \\
& \\
 \frac{g(0)}{z+i}\   ,&  s= 0   \\
& \\
\frac{i}{2}(-s)^{ -\frac{z+i}{2i} }\int_{-\infty}^s (-w)^{
\frac{z-i}{2i} }g(w)\,dw\ , & s<0
\end{array}
\right. \  .
\end{equation}
\end{theorem}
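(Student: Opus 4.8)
The plan is to follow the template of Theorems \ref{Pna} and \ref{X+Pna}. Writing $T=XP+PX$, I would first record the explicit action of $T$: from $Xf(s)=sf(s)$ and $Pf(s)=-if'(s)$ one computes $XPf(s)=-isf'(s)$ and $PXf(s)=-if(s)-isf'(s)$, so that
\begin{equation}
Tf(s)=-i\bigl(f(s)+2sf'(s)\bigr)\ .
\end{equation}
By Theorem \ref{sa2}, $T$ is essentially self-adjoint, so its closure is self-adjoint, hence has real spectrum and empty residual spectrum (\cite{Richtmyer}, Theorem 1). It then remains to rule out point spectrum, to derive (\ref{rops}), and to show that every real number lies in the spectrum.

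For the absence of point spectrum I would solve $Tg=\lambda g$ with $\lambda\in\mathbb{R}$. The equation $-i(g+2sg')=\lambda g$ separates to $g'/g=-(\lambda+i)/(2is)$ on each half-line, giving $g(s)=c\,|s|^{-\frac{\lambda+i}{2i}}=c\,|s|^{-\frac12+\frac{i\lambda}{2}}$, so that $|g(s)|^2=|c|^2|s|^{-1}$, which is integrable near neither $0$ nor $\infty$ unless $c=0$. Hence $T$ has no eigenvalues.

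The resolvent formula (\ref{rops}) I would obtain by solving $(z-T)G=g$, i.e. the first-order linear ODE
\begin{equation}
2is\,G'(s)+(z+i)G(s)=g(s)\ ,
\end{equation}
on each half-line by the integrating factor $s^{(z+i)/(2i)}$ for $s>0$ and $(-s)^{(z+i)/(2i)}$ for $s<0$. Since $(z+i)/(2i)-1=(z-i)/(2i)$ and $1/(2i)=-i/2$, integrating from $s$ to $+\infty$ (resp. from $-\infty$ to $s$) selects the particular solution decaying at infinity and reproduces exactly (\ref{rops}); at $s=0$ the equation degenerates to $(z+i)G(0)=g(0)$, giving the middle value $G(0)=g(0)/(z+i)$. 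The role of the hypothesis ${\rm Im}\,z>-1$ is that $|w^{(z-i)/(2i)}|=|w|^{({\rm Im}\,z-1)/2}$ is then integrable near $w=0$, so the integrals converge and, expanding $g(w)=g(0)+O(w)$ near the origin, one checks that $\lim_{s\to0^\pm}G(s)=g(0)/(z+i)$, matching the prescribed value and rendering $G$ continuous at $0$. This matching at the origin, together with verifying that the resulting $G$ lies in $L^2$ (and in the Schwartz-type domain) both near $0$ and at $\infty$, is the step I expect to be the main technical obstacle, and it is precisely where the restriction ${\rm Im}\,z>-1$ is forced.

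Finally, to identify the continuous spectrum with all of $\mathbb{R}$, I would produce, for each real $z$, an approximate eigenvector, exactly as the Gaussian packets are used in Theorem \ref{X+Pna}. Since $T=-i(2s\,\partial_s+1)$ is twice the generator of dilations, the logarithmic substitution $s=\pm e^u$ turns the non-normalizable solutions $|s|^{-\frac12+\frac{iz}{2}}$ found above into plane waves; regularizing them by a slowly spreading Gaussian suggests
\begin{equation}
g_\varepsilon(s)=c_\varepsilon\,|s|^{-\frac12+\frac{iz}{2}}\,e^{-\frac{\varepsilon}{2}(\ln|s|)^2}\,,\quad \|g_\varepsilon\|=1\ ,
\end{equation}
and I would verify that $\|(z-T)g_\varepsilon\|\to0$ as $\varepsilon\to0^+$. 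As $T$ has neither point nor residual spectrum, this places every real $z$ in the continuous spectrum; alternatively, density of the range of $z-T$ can be shown by transporting the sequence of Theorem \ref{Pna} through the unitary logarithmic change of variables that intertwines $T$ with a multiple of $P$. This completes the identification of the spectrum.
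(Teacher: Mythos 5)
Your proposal is correct and follows the same overall skeleton as the paper's proof: essential self-adjointness (Theorem \ref{sa2}) gives a real spectrum and an empty residual spectrum, the homogeneous first-order ODE rules out eigenvalues, and approximate eigenvectors together with density of the range place every real $z$ in the continuous spectrum. The differences lie in the ingredients. The paper does not rederive the resolvent or the homogeneous solutions---it cites Theorem 2.2 of \cite{BFspec2} both for (\ref{rops}) and for the fact that the only $L^2$ solution of $2is\,G'+(i+z)G=0$ is $G=0$---whereas you reconstruct both via the integrating factor $s^{(z+i)/(2i)}$, which is self-contained and makes transparent where ${\rm Im}\,z>-1$ enters. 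For the continuous spectrum, the paper's approximate eigenvector is $\Gamma(0,2\varepsilon^2)^{-1/2}\,|s|^{-1/2+iz/2}e^{-s^2}$ truncated to $|s|\geq\varepsilon$ (the incomplete Gamma function supplying the normalization), while you regularize the same generalized eigenfunction $|s|^{-1/2+iz/2}$ by a Gaussian in $\ln|s|$; your choice exploits more directly that $XP+PX$ is twice the dilation generator, conjugate to a multiple of $P$ under $s=\pm e^{u}$, and it does work (one finds $\|(z-T)g_\varepsilon\|^2=2\varepsilon$ after normalization). Your density-of-range argument by transporting the sequence of Theorem \ref{Pna} through that unitary parallels what the paper does for $X+P$ via $(V\psi)(s)=e^{is^2/2}\psi(s)$, whereas for $XP+PX$ the paper instead exhibits the explicit preimages (\ref{rops2}) of functions $g\in D$ from Lemma \ref{den}. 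One small caution: the limit $\lim_{s\to0^{\pm}}G(s)=g(0)/(z+i)$ is not a consequence of ${\rm Im}\,z>-1$ alone, since for generic $g$ the term $\frac{i}{2}s^{-\frac{z+i}{2i}}\int_0^\infty w^{\frac{z-i}{2i}}g(w)\,dw$ diverges as $s\to0^{+}$; it holds here because the hypothesis that $g$ lies in the range of $z-(XP+PX)$ forces $\int_0^\infty w^{\frac{z-i}{2i}}g(w)\,dw=0$ (and its mirror on $s<0$), which is exactly the boundary condition obtained by letting $s\to0$ in the integrated equation. You should state that explicitly, but it does not affect the validity of the argument.
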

\begin{proof} The proof of (\ref{rops}) can be found in Theorem 2.2 of
\cite{BFspec2}. Since, by Theorem \ref{sa2},  $XP+PX$ is
essentially self-adjoint, its spectrum is real. If
$z\in\mathbb{R}$ is an eigenvalue of $XP+PX$, then a corresponding
eigenfunction $G$ would satisfy
\begin{equation}\label{hwg}
 2 i s\,G^{\prime}(s)+(i+z)G(s)=0 \,,\, s\in\mathbb{R} \ .
\end{equation}
For $s=0$, (\ref{hwg}) implies  that $G(0)=0$. For $s\neq 0$ it
becomes
\begin{equation}\label{hwg2}
 G^{\prime}(s)+\frac{i+z}{2is}G(s)=0 \  ,
\end{equation}
which was solved in Theorem 2.2 of \cite{BFspec2} (for $g=0$ and
$a=z$  in their notation) and has the unique solution $G=0$. Thus
$XP+PX$ has no point spectrum.

To show that every real number $z$ is in the continuous spectrum
of $XP+PX$, we will show that the range of $z-(XP+PX)$ is dense in
$L^2(\mathbb{R},\mathbb{C})$ for every $z\in\mathbb{R}$ and that
for every  $\varepsilon >0$  there exists an \textit{approximate
eigenvector}  $g_\varepsilon$  of $z-(XP+PX)$  (see
\cite{Richtmyer} p.144).

For $g\in D$, where $D$ is as in Lemma \ref{den}, i.e. for $g \in
C_0^{\infty}(\mathbb{R})$ vanishing in a neighborhood
$(-\varepsilon, \varepsilon)$  of $0$, the equation
\begin{equation}\label{hwg3}
(z-(XP+PX))G=g \iff   2is\,  G^{\prime}(s)+(i+z)G(s)=g(s)\  ,
\end{equation}
has, see Theorem 2.2 of \cite{BFspec2}, for every
$z\in\mathbb{R}$, the solution
\begin{equation}\label{rops2}
G(s)=\left\{
\begin{array}{llr}
 \frac{i}{2}s^{ -\frac{z+i}{2i} }\int_s^\infty w^{ \frac{z-i}{2i}
}g(w)\,dw  \   ,&   s\geq \varepsilon   \\
& \\
 0\   ,&  s \in (-\varepsilon, \varepsilon )    \\
& \\
\frac{i}{2}(-s)^{ -\frac{z+i}{2i} }\int_{-\infty}^s (-w)^{
\frac{z-i}{2i} }g(w)\,dw\ , & s\leq -\varepsilon
\end{array}
\right. \  ,
\end{equation}
which is a.e. in $\mathcal{S}$. Thus the range of of $z-(XP+PX)$
is dense in $L^2(\mathbb{R},\mathbb{C})$ for every
$z\in\mathbb{R}$.

Moreover, if $z\in\mathbb{R}$ then, for every  $\varepsilon
>0$,  the function $g_\varepsilon$ defined by

\begin{equation}\label{rops3}
g_\varepsilon(s)=\left\{
\begin{array}{llr}
\frac{1}{   \sqrt{   \Gamma(0, 2 \varepsilon^2 ) }  } e^{\left(-\frac{1}{2}+i \frac{z}{2}\right) \ln |s|}e^{-s^2}\   ,&   s\notin (-\varepsilon, \varepsilon )    \\
& \\
 0\   ,&  s \in (-\varepsilon, \varepsilon )
\end{array}
\right. \  ,
\end{equation}
where
\begin{equation}
\Gamma(a, b):=\int_b^\infty t^{a-1} e^{-t} \,dt \  ,
\end{equation}
is the \textit{incomplete Gamma function}, is an
\textit{approximate eigenvector} of $z-(XP+PX)$ because it is a.e.
in $\mathcal{S}$ and satisfies
\begin{align}
\|g_\varepsilon \|&=1 \  ,\\
(z-(XP+PX))g_\varepsilon(s)&=\left\{
\begin{array}{llr}
\frac{-4 i}{   \sqrt{   \Gamma(0, 2 \varepsilon^2 ) }  } e^{\left(\frac{3}{2}+i \frac{z}{2}\right) \ln |s|}e^{-s^2}\   ,&   s\notin (-\varepsilon, \varepsilon )    \\
& \\
 0\   ,&  s \in (-\varepsilon, \varepsilon )
\end{array}
\right. \ ,\\
 \|(z-(XP+PX))g_\varepsilon
\|&=\frac{2}{ \sqrt{   \Gamma(0, 2 \varepsilon^2 ) }
}e^{-\varepsilon^2 } \sqrt{1+2 \varepsilon^2 } \to 0
\,\,,\,\,\varepsilon \to 0^+ \  ,
\end{align}
since
\begin{equation}
\lim _{ \varepsilon \to 0^+ }  \Gamma(0, 2 \varepsilon^2 )
=+\infty \  .
\end{equation}
\end{proof}
\begin{theorem}\label{XP+PX} For
$t\in \mathbb{R}$, the vacuum characteristic function of $XP+PX$
is
\begin{equation}
\langle e^{i t (XP+PX)}  \rangle=\left({\rm sech} \,2t
\right)^{1/2} \  .
\end{equation}
\end{theorem}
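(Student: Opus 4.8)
The plan is to follow the template of the proofs of Theorems~\ref{P} and~\ref{X+P}. Write $T:=XP+PX$. By Theorem~\ref{csp} the spectrum of $T$ is purely continuous and equal to $\mathbb{R}$, so $\langle\Phi,R(z;T)\Phi\rangle$ has a jump across the real axis and the contour in~(\ref{tw}) must be split as in Theorem~\ref{P}. I would thus interpret~(\ref{tw}) through the two semicircular paths $C^{+}(\varepsilon,r)$ and $C^{-}(\varepsilon,r)$ closed off by the horizontal segments $a^{+}(\varepsilon,r)$ and $a^{-}(\varepsilon,r)$ at heights $+\varepsilon$ and $-\varepsilon$. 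Throughout each closed half-disk $\gamma^{\pm}$ the point $z$ lies in the resolvent set $\mathbb{C}\setminus\mathbb{R}$, so $e^{itz}\langle\Phi,R(z;T)\Phi\rangle$ is analytic there and Cauchy's theorem gives $\oint_{C^{\pm}}=-\oint_{a^{\pm}}$. Letting $\varepsilon\to0^{+}$ and $r\to\infty$ then reduces the vacuum characteristic function to $-\frac{1}{2\pi i}\int_{\mathbb{R}}e^{itx}\bigl(\langle\Phi,R(x+i0;T)\Phi\rangle-\langle\Phi,R(x-i0;T)\Phi\rangle\bigr)\,dx$.

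Next I would identify the jump. The upper boundary value is given directly by~(\ref{rops}), which by Theorem~\ref{csp} is valid for ${\rm Im}\,z>-1$ and hence governs the limit $x+i0$; inserting $\Phi(s)=\pi^{-1/4}e^{-s^{2}/2}$ produces a double integral with the complex-power kernel $|s|^{(iz-1)/2}|w|^{-(iz+1)/2}$. The lower value need not be recomputed: since $T$ is self-adjoint and $\Phi$ is real, $R(\bar z;T)=R(z;T)^{*}$ gives $\langle\Phi,R(x-i0;T)\Phi\rangle=\overline{\langle\Phi,R(x+i0;T)\Phi\rangle}$. Combining the two, the restricted ranges of integration in~(\ref{rops}) merge and the jump collapses (up to an explicit constant) to $\iint_{s,w>0}(sw)^{-1/2}(s/w)^{ix/2}e^{-(s^{2}+w^{2})/2}\,ds\,dw$, together with its mirror over $s,w<0$, in which the variable $x$ appears only through the scaling factor $(s/w)^{ix/2}=e^{i\frac{x}{2}\ln(s/w)}$.

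The heart of the argument is the final evaluation, which I would do by interchanging the order of integration and performing the $x$-integration first. Because the only $x$-dependence is the oscillatory factor $e^{i\frac{x}{2}\ln(s/w)}$, this integration localizes the double integral onto the locus $s/w=e^{-2t}$ --- the analytic shadow of the fact that $e^{itT}$ acts as the dilation $\Phi(x)\mapsto e^{t}\Phi(e^{2t}x)$. The surviving single integral is Gaussian and is evaluated by~(\ref{i1}); simplifying $\sqrt2\,e^{-t}(1+e^{-4t})^{-1/2}=\sqrt2\,(e^{2t}+e^{-2t})^{-1/2}$ then produces $\langle e^{itT}\rangle=({\rm sech}\,2t)^{1/2}$.

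The main obstacle is precisely this last step, which is harder than in Theorems~\ref{P} and~\ref{X+P}: there the resolvent kernel is a pure exponential and the inner integrals are immediately Gaussian, whereas here the complex powers $|s|^{(iz-1)/2}$ coming from the dilation structure make $\langle\Phi,R(z;T)\Phi\rangle$ a genuine double integral of Beta/hypergeometric type. One must therefore justify the Fubini interchange of the $x$-, $s$- and $w$-integrations in the presence of the integrable endpoint singularity $(sw)^{-1/2}$ and of the only conditionally convergent oscillatory $x$-integral; as elsewhere in the paper this is cleanest if one keeps $\varepsilon>0$ and passes to the limit by bounded convergence, so that the $x$-integration is a bona fide Fourier inversion rather than a formal delta-function identity, and so that the constant and sign (and the contribution of the mirror region $s,w<0$) are pinned down correctly. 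As a cross-check and an alternative route, one may note that $\langle\Phi,R(z;T)\Phi\rangle$ continues meromorphically with simple poles at $z=\mp i(4k+1)$, $k\geq0$, and that summing the residues of $e^{itz}\langle\Phi,R(z;T)\Phi\rangle$ over these poles reproduces the binomial series of $\sqrt2\,e^{-t}(1+e^{-4t})^{-1/2}$, i.e. again $({\rm sech}\,2t)^{1/2}$.
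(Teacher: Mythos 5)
Your proposal is correct and, up to the last step, retraces the paper's own proof: the same splitting of the contour in (\ref{tw}) into $C^{\pm}(\varepsilon,r)$ closed by the segments $a^{\pm}(\varepsilon,r)$, the same appeal to Cauchy's theorem to trade $\oint_{C^{\pm}}$ for $-\oint_{a^{\pm}}$, and the same use of $R(\bar z;T)^{*}=R(z;T)$ to write the jump as $2i$ times an imaginary part, leading to $\tfrac{1}{\pi}\int_{\mathbb{R}}e^{itx}\,{\rm Im}\,\langle\Phi,R(x;T)\Phi\rangle\,dx$ with the kernel ${\rm Re}\bigl(s^{-1/2+ix/2}w^{-1/2-ix/2}\bigr)e^{-(s^{2}+w^{2})/2}$ over $0<s<w$, the $s,w<0$ region folding onto the positive one exactly as in the paper. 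Where you genuinely diverge is the endgame: the paper at that point simply cites Theorem 3 of \cite{BFspec2} for the value of the resulting triple integral, whereas you evaluate it self-containedly by performing the $x$-integration first, so that the oscillatory factor $e^{i\frac{x}{2}\ln(s/w)}$ localizes the $(s,w)$-integral onto $s=we^{-2t}$ and the surviving Gaussian gives $\sqrt{2}\,(e^{2t}+e^{-2t})^{-1/2}=({\rm sech}\,2t)^{1/2}$; I checked the constants (using $\delta\bigl(t+\tfrac12\ln(s/w)\bigr)=2s\,\delta(s-we^{-2t})$ and the symmetry halving of the quadrant) and they come out right, so this is a legitimate and arguably more transparent replacement for the external citation, being the analytic counterpart of the fact that $e^{it(XP+PX)}$ is the dilation $f(x)\mapsto e^{t}f(e^{2t}x)$. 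One caution: you attach formula (\ref{rops}) to the boundary value $R(x+i0)$ on the strength of the stated range ${\rm Im}\,z>-1$, but the semigroup representation $R(z;T)=i\int_{0}^{\infty}e^{-izt}e^{itT}\,dt$ shows that the $\int_{s}^{\infty}$ branch of (\ref{rops}) is really the ${\rm Im}\,z<0$ resolvent (for ${\rm Im}\,z>0$ it generically fails to be square-integrable near $s=0$), which flips the sign of the jump; you correctly flag that the overall sign and constant must be pinned down, and the paper is no more explicit on this point, the sign being fixed a posteriori by the normalization at $t=0$.
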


\begin{proof} As in Lemma 1 of \cite{BFspec2}, by the resolvent identity
\begin{equation}
R(\lambda; T)^*=R(\bar\lambda; T^*)\  ,
\end{equation}
it follows that
\begin{align}
e^{-t \varepsilon}&\langle \Phi, R(x+i\varepsilon;XP+PX)\Phi
\rangle-e^{t \varepsilon}\langle \Phi, R(x-i\varepsilon;XP+PX)\Phi
\rangle \\
=&e^{-t \varepsilon}\langle \Phi, R(x+i\varepsilon;XP+PX)\Phi
\rangle-e^{t \varepsilon}\left(\langle \Phi,
R(x+i\varepsilon;XP+PX)\Phi \rangle \right. \notag\\
&\left. + 2i \, {\rm Im }\langle \Phi, R(x-i\varepsilon;XP+PX)\Phi
\rangle \right) \notag \  ,
\end{align}
which tends to
\begin{equation}
-2i \, {\rm Im }\langle \Phi, R(x; XP+PX)\Phi \rangle \  ,
\end{equation}
as $\varepsilon\to 0^+$.

Thus, as in the proof of Theorem \ref{X+P}, using formula
(\ref{rops}) for the resolvent of $T:=XP+PX$, we have

\begin{align}
&\langle  e^{itT} \rangle=
-\frac{1}{2\pi i}\lim_{r\to\infty}\lim_{\varepsilon\to 0^+}\\
&\left( \oint_{a^+({\varepsilon, r})} e^{itz}\langle \Phi,
R(z;T)\Phi \rangle \,dz+\oint_{a^-({\varepsilon, r})}
e^{itz}\langle \Phi,
R(z;T)\Phi \rangle \,dz\right)\notag\\
&=-\frac{1}{2\pi i}\lim_{r\to\infty}\lim_{\varepsilon\to
0^+}\left(\int_{-\sqrt{r^2-\varepsilon^2}}^{\sqrt{r^2-\varepsilon^2}}
e^{it(x+i\varepsilon)}\langle \Phi,
R(x+i\varepsilon;T)\Phi \rangle \,dx \right.\notag\\
&\left.+\int_{\sqrt{r^2-\varepsilon^2}}^{-\sqrt{r^2-\varepsilon^2}}
e^{it(x-i\varepsilon)}\langle \Phi,
R(x-i\varepsilon;T)\Phi \rangle \,dx \right)\notag\\
&=-\frac{1}{2\pi i}\lim_{\varepsilon\to
0^+}\notag\\
&\int_{-\infty}^\infty e^{i t x}\left( e^{-t \varepsilon}\langle
\Phi, R(x+i\varepsilon;T)\Phi \rangle-e^{t \varepsilon}\langle
\Phi,
R(x-i\varepsilon;T)\Phi \rangle   \right)\, dx \notag\\
&=\frac{1}{\pi }\int_{-\infty}^\infty e^{i t x} \, {\rm Im
}\,\langle
\Phi, R(x; T)\Phi \rangle\, dx \notag\\
&=\frac{1}{\pi }\int_{-\infty}^\infty e^{i t x} \, {\rm Im
}\,\int_{-\infty}^\infty\Phi(s) R(x; T)\Phi(s)\,ds
\,dx \notag\\
&=\frac{1}{\pi^{3/2} }\int_{-\infty}^\infty e^{i t x} \, {\rm Im
}\, \left(\frac{i}{2} \int_{-\infty}^0
e^{-\frac{s^2}{2}}(-s)^{-\frac{1}{2}+i\frac{x}{2}}
\int_{-\infty}^s (-w)^{-\frac{1}{2}-i\frac{x}{2}}
e^{-\frac{w^2}{2}} \, dw\, ds
\right. \notag\\
&\left. +\frac{i}{2} \int^{\infty}_0
e^{-\frac{s^2}{2}}s^{-\frac{1}{2}+i\frac{x}{2}}
 \int^{\infty}_s
w^{-\frac{1}{2}-i\frac{x}{2}} e^{-\frac{w^2}{2}} \, dw\, ds
\right)\,dx \notag \   .
\end{align}
Letting $u=-s$ and $v=-w$ in the first integral, we see that it is
actually equal to the second one. Moreover, using ${\rm Im
}(iz)={\rm Re }z$, we find
\begin{align}
\langle  e^{it(T)} \rangle&=\frac{1}{\pi^{3/2}
}\int_{-\infty}^\infty e^{i t x} \, {\rm Re
}\left(\int^{\infty}_0  \int^{\infty}_s
e^{-\frac{s^2+w^2}{2}}s^{-\frac{1}{2}+i\frac{x}{2}}
w^{-\frac{1}{2}-i\frac{x}{2}} \, dw\, ds \right)\,dx  \\
&  =\frac{1}{\pi^{3/2} }\int_{-\infty}^\infty e^{i t x} \,
\left(\int^{\infty}_0  \int^{\infty}_s e^{-\frac{s^2+w^2}{2}} {\rm
Re }\left(s^{-\frac{1}{2}+i\frac{x}{2}}
w^{-\frac{1}{2}-i\frac{x}{2}}\right) \, dw\, ds \right)\,dx \notag
\ .
\end{align}
In Theorem 3 of \cite{BFspec2} it was shown that

\begin{align}
&\frac{1}{\pi^{3/2} }\int_{-\infty}^\infty e^{i t x} \,
\left(\int^{\infty}_0  \int^{\infty}_s  e^{-\frac{s^2+w^2}{2}}
{\rm Re }\left(s^{-\frac{1}{2}+i\frac{x}{2}}
w^{-\frac{1}{2}-i\frac{x}{2}}\right) \, dw\, ds \right)\,dx  \\
&=\left({\rm sech} \,2t \right)^{1/2} \notag\  .
\end{align}
\end{proof}

\begin{remark}\label{rem2} \rm Just as \textit{normal distribution} is typical
 of \textit{first-order quantum fields}, i.e. first-order
 in the \textit{white noise operators}
\begin{equation}
a:=\frac{X+iP}{\sqrt{2}}
\,\,,\,\,a^{\dagger}:=\frac{X-iP}{\sqrt{2}}\,\,,\,\, \lbrack a,
a^{\dagger}\rbrack=\mathbf{1} \  ,
\end{equation}
the \textit{generalized hyperbolic secant distribution} (see
\cite{hs} for details) with characteristic function
\begin{equation}
\left({\rm sech} \,\alpha t
\right)^{\rho}\,\,,\,\,\alpha>0\,\,,\,\,\rho>0 \  ,
\end{equation}
is typical of \textit{quadratic quantum fields}, i.e. of
second-order in $a$ and $a^{\dagger}$ (see \cite{AccBouCOSA} and
\cite{ABLT}).

\end{remark}

\section{The Vacuum Characteristic Function of  $\frac{1}{2}(X^2+P^2)$}

References to the spectrum of the Quantum Harmonic Oscillator
Hamiltonian operator
\begin{equation}
\frac{1}{2}(X^2+P^2)=\frac{s^2}{2}-\frac{d^2}{ds^2}\  ,
\end{equation}
are abundant, see, for example, p.145 of \cite{Richtmyer} and also
\cite{BeSen}, and they typically involve reducing it to the
\textit{number operator} $N:=a\,a^\dagger$ using the formulas of
Remark \ref{rem} above. In Theorems \ref{sa3} and \ref{csp2}
below, we prove the essential self-adjointness of the Hamiltonian
operator and the fact that it has purely point spectrum, with the
use of the formula for its resolvent obtained in Theorem 5 of
\cite{BFspec2}.

\begin{theorem}\label{sa3} The operator
$\frac{1}{2}(X^2+P^2)$ is essentially self-adjoint on
$\mathcal{S}$.
\end{theorem}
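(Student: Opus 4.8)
The plan is to apply the von Neumann criterion exactly as in Theorems \ref{sa} and \ref{sa2}, using the explicit resolvent formula from Theorem 5 of \cite{BFspec2} in place of the ad hoc constructions made there. Write $H=\frac{1}{2}(X^2+P^2)$. First I would establish symmetry on $\mathcal{S}$: since $\mathcal{S}$ is a common invariant domain for the self-adjoint operators $X$ and $P$, for $f,g\in\mathcal{S}$ one has $\langle X^2f,g\rangle=\langle f,X^2g\rangle$ and $\langle P^2f,g\rangle=\langle f,P^2g\rangle$, whence $\langle Hf,g\rangle=\langle f,Hg\rangle$, so $H\subseteq H^*$.

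By the criterion in \cite{Richtmyer}, p.157, it then suffices to show that $(H\pm i)f=g$ has a solution $f\in\mathcal{S}$ for each $g$ in a dense subset of $L^2(\mathbb{R},\mathbb{C})$. I would take $g\in C_0^{\infty}(\mathbb{R})$ and set $f=-R(z;H)g$, with $z=i$ and $z=-i$ handling the two signs, exactly as in (\ref{rrr}), reading off the integral kernel from Theorem 5 of \cite{BFspec2}. The substantive work is then to verify that this $f$ genuinely lies in $\mathcal{S}$, i.e. that $f\in L^2(\mathbb{R},\mathbb{C})$, that $s^k f\in L^2(\mathbb{R},\mathbb{C})$ for every $k$, and that the derivatives of $f$ are rapidly decreasing, by the kind of Cauchy--Bunyakovsky--Schwarz estimates carried out in the proof of Theorem \ref{sa}. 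Because the oscillator kernel carries Gaussian-type decay these estimates should go through, but this is the step I expect to be the main obstacle and the one demanding the most careful bookkeeping.

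For uniqueness I would show that the homogeneous equation $(H\pm i)f=0$ has only the trivial solution. Writing it out gives the Weber equation $f''=(s^2\pm 2i)f$; any $L^2$ solution is smooth by elliptic regularity, and a short asymptotic analysis shows that the solution recessive at $+\infty$ behaves like $e^{-s^2/2}$ times a power of $s$, while it grows like $e^{+s^2/2}$ at $-\infty$ unless the spectral parameter is a real eigenvalue of the form $n+\tfrac12$. Since $\mp i$ is not of this form, no nonzero $L^2$ solution exists (equivalently, a nonzero solution would be an eigenvector of the symmetric operator $H$ with non-real eigenvalue, which is impossible). Combining existence, the density of $C_0^{\infty}(\mathbb{R})$ in $L^2(\mathbb{R},\mathbb{C})$, and uniqueness shows that the ranges of $H+i$ and $H-i$ are both dense, so $H$ is essentially self-adjoint on $\mathcal{S}$.
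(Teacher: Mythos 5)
Your proposal follows essentially the same route as the paper: symmetry of $H=\frac{1}{2}(X^2+P^2)$ on $\mathcal{S}$, the range criterion of \cite{Richtmyer} applied by reading $f=-R(\pm i;H)g$ off the explicit resolvent of Theorem 5 of \cite{BFspec2} for $g\in C_0^{\infty}(\mathbb{R})$, and triviality of the $L^2$ solutions of the homogeneous Weber equation via the large-argument asymptotics of the two ${}_1F_1$ solutions $M_1,M_2$, with the verification that $f\in\mathcal{S}$ deferred to Schwarz-type estimates exactly as the paper defers it to the proof of Theorem \ref{sa}. One caution: your parenthetical alternative, that a nonzero solution of $(H\pm i)f=0$ would be an eigenvector of the symmetric operator $H$ with non-real eigenvalue and hence impossible, is circular, because such an $f$ need only lie in the domain of $H^*$ rather than in $\mathcal{S}$ (deficiency vectors are precisely what essential self-adjointness must exclude), so the asymptotic analysis is the argument that actually does the work.
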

\begin{proof} In Theorem 4 of \cite{BFspec2} we showed that
$\frac{1}{2}(X^2+P^2)$ admits a self-adjoint extension. To show
that the self-adjoint extension is unique, i.e. that
$\frac{1}{2}(X^2+P^2)$ is essentially self-adjoint, as in the
proof of Theorem \ref{sa2}, we notice that, by Theorem 5 of
\cite{BFspec2} with $c_1=c_2=0$, the \textit{non--homogeneous
Weber equation}
\begin{equation}\label{vn223}
\frac{1}{2}(X^2+P^2)f\pm i f=g  \ ,
\end{equation}
has a (unique) solution $f\in  \mathcal{S}$ a.e., for each $g \in
C_0^{\infty}(\mathbb{R})$ with  $a=\min \supp (g)$ and $b=\max
\supp (g)$, given by

\begin{equation}\label{drrrr}
f(s) = \left\{
\begin{array}{llr}
  -\int_{a\sqrt{2}}^{s\sqrt{2}} g\left(\frac{w}{\sqrt{2}}\right)
 W(s, w)\,dw
 \   ,&   s\in \lbrack a, b \rbrack   \\
& \\
 0\   ,&  s\notin   \lbrack a, b \rbrack
\end{array}
\right. \  ,
\end{equation}

where

\begin{equation}
W(s, w)=M_2(s\sqrt{2}; \mp i)\,M_1(w; \mp i)-M_1(s\sqrt{2}; \mp
i)\, M_2(w; \mp i) \  ,
\end{equation}

\begin{align}
M_1 (x; y)&=e^{-\frac{x^2}{4}}
{}_1F_1\left(-\frac{y}{2}+\frac{1}{4}; \frac{1}{2}; \frac{x^2}{2}
\right)  \  ,\\
M_2(x; y)&= x e^{-\frac{x^2}{4}}
{}_1F_1\left(-\frac{y}{2}+\frac{3}{4}; \frac{3}{2}; \frac{x^2}{2}
\right) \  ,
\end{align}
and
\begin{equation}
{}_{1}F_1\left(y;  c; x\right)=\sum_{n=0}^\infty \frac{ (y)_n
}{(c)_n }\frac{x^n}{n!} \  ,
\end{equation}
is  \textit{Kummer's confluent hypergeometric function}.
  The proof  that $f\in \mathcal{S}$ a.e. proceeds
along the lines of the proof of Theorem \ref{sa}. The proof of the
fact that such an $f$ is unique follows from the fact that the
solution of the \textit{homogeneous Weber equation}
\begin{equation}\label{vn2233}
\frac{1}{2}(X^2+P^2)f\pm i f=0  \ ,
\end{equation}
is (see Theorem 5 of \cite{BFspec2}),
\begin{equation}
 f(s)=c_1\,M_1(s; \mp i)+c_2 \, M_2(s; \mp i) \ ,
\end{equation}
which, as can be seen by the asymptotic formula (13.1.4) of
\cite{web1}, namely,
\begin{equation}
{}_{1}F_1\left(y;  c; x\right)=\frac{\Gamma (c)}{\Gamma (y)}e^x
x^{y-c} \left(1+O\left(|x|^{-1}  \right)
\right)\,\,,\,\,x>0\,\,,\,\,x\to \infty \  ,
\end{equation}
 is in $\mathcal{S}$ if and only if $c_1=c_2=0$.  Thus
$\frac{1}{2}(X^2+P^2)$ is essentially self-adjoint on
$\mathcal{S}$.
\end{proof}

\begin{theorem}\label{csp2}  The operator  $\frac{1}{2}(X^2+P^2)$ has a purely point
 spectrum consisting of
 the numbers $(2 n-1)\frac{1}{2}$, $n=1, 2, ...$. Moreover, for $g$ in the range of
  $z-\frac{1}{2}(X^2+P^2)$, the
resolvent operator $R\left(z; \frac{1}{2}(X^2+P^2)\right)$, is

\begin{align}\label{rops4}
R&\left(z; \frac{1}{2}\left(X^2+P^2\right)\right)
g(s)=c_1(z)\,M_1(s\sqrt{2}; z)+c_2(z) \, M_2(s\sqrt{2}; z)\\
&+\int_{-\infty}^{s\sqrt{2}} g\left(\frac{w}{\sqrt{2}}\right)
\left( M_1(w; z) M_2(s\sqrt{2}; z)-M_1(s\sqrt{2}; z)M_2(w; z)
\right)\,dw\  , \notag
\end{align}
where   $c_1(z),c_2(z)\in \mathbb{C} $,
\begin{align}
M_1 (x; y)&=e^{-\frac{x^2}{4}}
{}_1F_1\left(-\frac{y}{2}+\frac{1}{4}; \frac{1}{2}; \frac{x^2}{2}
\right)\    ,\\
M_2(x; y)&= x e^{-\frac{x^2}{4}}
{}_1F_1\left(-\frac{y}{2}+\frac{3}{4}; \frac{3}{2}; \frac{x^2}{2}
\right)\  ,
\end{align}
and
\begin{equation}
{}_{1}F_1\left(y;  c; x\right)=\sum_{n=0}^\infty \frac{ (y)_n
}{(c)_n }\frac{x^n}{n!} \  ,
\end{equation}
is  \textit{Kummer's confluent hypergeometric function}, where for
$a\in\mathbb{R}$ and $n\geq 1$, $(a)_n=a(a+1)(a+2)\cdots(a+n-1)$,
while for $n=0$, $(a)_0=1$.

\end{theorem}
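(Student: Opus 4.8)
The plan is to mirror the proofs of Theorems \ref{csp} and \ref{sa3}: the resolvent formula (\ref{rops4}) is taken from Theorem 5 of \cite{BFspec2}, and the real work is the spectral description. By Theorem \ref{sa3} the operator is essentially self-adjoint on $\mathcal{S}$, so its spectrum is real and it suffices to classify each $z\in\mathbb{R}$ as either an eigenvalue or a point of the resolvent set. An eigenfunction for $z$ must solve the homogeneous Weber equation $\frac{1}{2}(X^2+P^2)f=zf$; deleting the integral term in (\ref{rops4}) shows that its solution space is spanned by $M_1(s\sqrt{2};z)$ and $M_2(s\sqrt{2};z)$, so $f(s)=c_1M_1(s\sqrt{2};z)+c_2M_2(s\sqrt{2};z)$.

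To locate the eigenvalues I would use the asymptotic formula (13.1.4) of \cite{web1} already invoked in Theorem \ref{sa3}. Since $M_1(s\sqrt{2};z)=e^{-s^2/2}\,{}_1F_1\left(-\tfrac{z}{2}+\tfrac14;\tfrac12;s^2\right)$ and $M_2(s\sqrt{2};z)=s\sqrt{2}\,e^{-s^2/2}\,{}_1F_1\left(-\tfrac{z}{2}+\tfrac34;\tfrac32;s^2\right)$ each contain a ${}_1F_1$ whose argument $s^2\to+\infty$, that formula gives ${}_1F_1(y;c;s^2)\sim \Gamma(c)\,\Gamma(y)^{-1}e^{s^2}s^{2(y-c)}$, whence $M_1,M_2\sim e^{s^2/2}(\cdots)$ blow up and $f\notin\mathcal{S}$, unless the relevant series terminates into a polynomial. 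Termination occurs exactly when the first Kummer parameter is a non-positive integer $-n$, so that $\Gamma(y)^{-1}=0$. For $M_1$ this reads $-\tfrac{z}{2}+\tfrac14=-n$, i.e. $z=2n+\tfrac12$, and for $M_2$ it reads $-\tfrac{z}{2}+\tfrac34=-n$, i.e. $z=2n+\tfrac32$, $n=0,1,2,\dots$. The union of these two families is precisely $\{(2m-1)\tfrac12:m=1,2,\dots\}$, and at each such $z$ exactly one of $M_1,M_2$ becomes a polynomial multiple of $e^{-s^2/2}$ (a Hermite function), giving a one-dimensional eigenspace while the other solution still blows up.

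It then remains to prove that the spectrum is \emph{purely} point, i.e. that no other real $z$ is spectral. The eigenfunctions produced above are, up to normalization, the Hermite functions $H_m(s)e^{-s^2/2}$, which are classically a complete orthonormal basis of $L^2(\mathbb{R},\mathbb{C})$; by the spectral theorem, completeness of the eigenvectors of the self-adjoint closure forces its spectral measure to be purely atomic, ruling out continuous spectrum and yielding the spectrum $\{(2m-1)\tfrac12\}$. I expect this exclusion of continuous spectrum to be the main obstacle: the eigenvalue bookkeeping and the citation of (\ref{rops4}) are routine, whereas the absence of continuous spectrum genuinely requires completeness of the Hermite basis. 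The alternative route — a Weyl--Titchmarsh limit-point analysis at the singular endpoints $\pm\infty$, showing that for $z$ off the list the coefficients $c_1(z),c_2(z)$ in (\ref{rops4}) remain finite and $R(z;\cdot)$ is bounded, while they develop poles exactly at the eigenvalues where the two recessive solutions coincide and the Wronskian vanishes — is more delicate and I would invoke it only as confirmation.
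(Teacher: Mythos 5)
Your proposal is correct and follows essentially the same route as the paper: the resolvent formula is quoted from Theorem 5 of \cite{BFspec2}, the eigenvalues are located by asking when the ${}_1F_1$ series in $M_1$ or $M_2$ terminates into a polynomial (equivalently, when the first Kummer parameter is a non-positive integer, giving the two interlaced families whose union is $\{(2m-1)\tfrac12\}$), and all other real $z$ are excluded via the asymptotic formula (13.1.4) of \cite{web1}. The only divergence is at the last step: where the paper simply cites Theorem 7.15 of \cite{GS} for the absence of continuous spectrum, you derive it from the completeness of the Hermite basis together with the spectral theorem --- a legitimate and in fact more self-contained justification of the same fact.
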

\begin{proof} The proof of (\ref{rops4}) can be found in Theorem 5 of
\cite{BFspec2}. As in the proof of Theorem \ref{sa3}, the solution
of the \textit{homogeneous Weber equation}
\begin{equation}\label{vn224}
\left(z-\frac{1}{2}(X^2+P^2)\right)f(s)=0  \ ,
\end{equation}
is (see Theorem 5 of \cite{BFspec2}),
\begin{equation}\label{toc}
 f(s)=c_1(z)\,M_1(s\sqrt{2}; z)+c_2(z) \, M_2(s\sqrt{2}; z) \ .
\end{equation}
For $z=2 n-\frac{3}{2}$, $n=1, 2, ...$, the ${}_1F_1$ function in
the definition of $M_1$ becomes a polynomial, so $M_1\in
\mathcal{S}$. Therefore $M_1(s\sqrt{2}; 2 n-\frac{3}{2})$ is, for
each $n$, an eigenfunction of $\frac{1}{2}(X^2+P^2)$ corresponding
to the eigenvalue $z=2 n-\frac{3}{2}$.

Similarly, for $z=2 n-\frac{1}{2}$, $n=1, 2, ...$, the ${}_1F_1$
function in the definition of $M_2$ becomes a polynomial, so
$M_2\in \mathcal{S}$. Therefore $M_2(s\sqrt{2}; 2 n-\frac{1}{2})$
is, for each $n$, an eigenfunction of $\frac{1}{2}(X^2+P^2)$
corresponding to the eigenvalue $z=2 n-\frac{1}{2}$.

Thus the numbers $z=(2 n-1)\frac{1}{2}$, $n=1, 2, ...$, are in the
point spectrum of  $\frac{1}{2}(X^2+P^2)$. They are the only
eigenvalues of $\frac{1}{2}(X^2+P^2)$, since in all other cases
the function $f$ in (\ref{toc}) is not in $\mathcal{S}$. As shown
in Theorem 7.15 of \cite{GS}, $\frac{1}{2}(X^2+P^2)$ has no
continuous spectrum, and there is no residual spectrum since
$\frac{1}{2}(X^2+P^2)$ is essentially self-adjoint. Thus
$\frac{1}{2}(X^2+P^2)$ has a purely point spectrum.

Using the formulas in Chapter 13  of \cite{zj}, we can write the
eigenfunctions in terms of the Hermite polynomials, a well-known
fact in the literature, see, for example,  p.145 of
\cite{Richtmyer} and also \cite{BeSen}.
\end{proof}

\begin{theorem}\label{Ham} For $t\in
\mathbb{R}$, the vacuum characteristic function of
$\frac{1}{2}\left(X^2+P^2\right)$ is
\begin{equation}
\langle  e^{ \frac{i t}{2}\left(X^2+P^2\right)}
\rangle=e^{\frac{it}{2}} \ .
\end{equation}
\end{theorem}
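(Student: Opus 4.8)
The plan is to exploit Theorem~\ref{csp2}, which shows that $T:=\frac{1}{2}(X^2+P^2)$ has purely point spectrum $\{(2n-1)\tfrac12:n\ge1\}$, together with the observation that the vacuum vector $\Phi$ is itself the ground-state eigenfunction. First I would evaluate the eigenfunction $M_1(s\sqrt{2};z)$ of Theorem~\ref{csp2} at the smallest eigenvalue $z=\frac12$ (the case $n=1$): since $-\frac12\cdot\frac12+\frac14=0$ and ${}_1F_1(0;\frac12;x)=1$, one gets $M_1(s\sqrt{2};\frac12)=e^{-s^2/2}$, so that $\Phi=\pi^{-1/4}M_1(s\sqrt{2};\frac12)$ is, up to the normalizing constant in~(\ref{phi}), the eigenfunction of $T$ with eigenvalue $\frac12$, and $\|\Phi\|=1$.

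The second step is to compute the resolvent matrix element directly from the eigenvalue equation rather than from the explicit formula~(\ref{rops4}). Because $T\Phi=\frac12\Phi$, for every $z\neq\frac12$ we have $(z-T)\Phi=(z-\frac12)\Phi$, hence $R(z;T)\Phi=(z-\frac12)^{-1}\Phi$, and therefore
\begin{equation}
\langle\Phi,R(z;T)\Phi\rangle=\frac{\langle\Phi,\Phi\rangle}{z-\frac12}=\frac{1}{z-\frac12}\ .
\end{equation}
This is the crux of the argument: although $R(z;T)$ has poles at every eigenvalue $(2n-1)\frac12$, the vacuum matrix element sees only the simple pole at $z=\frac12$, because $\Phi$ is orthogonal to all the higher eigenfunctions and these contribute zero residue.

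Finally, I would feed this into the weak Dunford formula~(\ref{tw}). In sharp contrast to Theorems~\ref{X}, \ref{P}, \ref{X+P} and~\ref{XP+PX}, where the spectrum fills $\mathbb{R}$ and the resolvent matrix element fails to be analytic on the real axis, here $\langle\Phi,R(z;T)\Phi\rangle$ is genuinely meromorphic with a single simple pole, so no limiting deformation of the contour is needed. Taking $C$ to be any positively oriented closed contour that encircles $z=\frac12$ once, Cauchy's integral formula applied to the entire function $e^{itz}$ gives
\begin{equation}
\langle e^{itT}\rangle=\frac{1}{2\pi i}\oint_C\frac{e^{itz}}{z-\frac12}\,dz=e^{\frac{it}{2}}\ ,
\end{equation}
which is the claimed value. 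The only point requiring care, and the one I regard as the true content of the proof, is the justification that the matrix element has no other singularities; this is precisely the orthogonality of $\Phi$ to the remaining eigenfunctions noted above, and it is what collapses the whole computation to a single residue rather than to an infinite sum over the point spectrum.
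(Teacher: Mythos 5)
Your proof is correct, but it takes a genuinely different route from the paper's. You observe that $\Phi=\pi^{-1/4}M_1(s\sqrt{2};\tfrac12)$ is the ground-state eigenfunction, $T\Phi=\tfrac12\Phi$ (indeed $\Phi''=(s^2-1)\Phi$ gives $\tfrac12(X^2+P^2)\Phi=\tfrac12\Phi$), so that $R(z;T)\Phi=(z-\tfrac12)^{-1}\Phi$ and the matrix element is exactly $(z-\tfrac12)^{-1}$; a single application of Cauchy's formula then finishes the computation. The paper instead works entirely from the explicit resolvent formula (\ref{rops4}): it discards the $M_2$ term by a parity/moment argument, leaves $c_1(z)$ undetermined, and then fixes ${\rm Im}\,c_1(x)$ \emph{a posteriori} by imposing the normalization $\langle e^{i0T}\rangle=1$, which forces the distributional identification ${\rm Im}\,c_1(x)=\pi^{3/4}\delta_{1/2}(x)$ and hence the answer $e^{it/2}$. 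Your argument is shorter, avoids any distributional interpretation of the resolvent coefficients, and makes transparent why only the eigenvalue $\tfrac12$ contributes (the resolvent identity $(z-T)\Phi=(z-\tfrac12)\Phi$ does this directly; the appeal to orthogonality of $\Phi$ to the higher eigenfunctions, while true, is not even needed). What the paper's route buys is consistency with its overall program: every characteristic function in the paper is extracted from the explicit resolvent kernel via the contour formula (\ref{tw}), and Theorem \ref{Ham} is presented as the degenerate instance of that machinery rather than as a corollary of the spectral decomposition. Both arguments are sound; yours is the more economical and, arguably, the more rigorous of the two.
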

\begin{proof}
 Using formula
(\ref{rops4}) for the resolvent of
$T:=\frac{1}{2}\left(X^2+P^2\right)$, as in the proof of Theorem
\ref{XP+PX}, using the fact that
\begin{equation}
\int_{-\infty}^{s\sqrt{2}} g\left(\frac{w}{\sqrt{2}}\right) \left(
M_1(w; z) M_2(s\sqrt{2}; z)-M_1(s\sqrt{2}; z)M_2(w; z) \right)\,dw
\ ,
\end{equation}
is real, we have
\begin{align}
\langle  e^{itT} \rangle&=\frac{1}{\pi }\int_{-\infty}^\infty e^{i
t x} \, {\rm Im }\,\int_{-\infty}^\infty\Phi(s) R(x; T)\Phi(s)\,ds
\,dx
\\
&= \frac{1}{\pi }\int_{-\infty}^\infty e^{i t x} \, {\rm Im
}\,\int_{-\infty}^\infty\Phi(s) \left( c_1(x)\,M_1(s\sqrt{2};
x)+c_2(x) \, M_2(s\sqrt{2}; x)   \right)\,ds \,dx \  .  \notag
\end{align}
Using, as in the proof of Theorem 6 of \cite{BFspec2}, the fact
that
\begin{equation}
\int_{\mathbb{R}}s^{2n}e^{-s^2}\,
ds=\Gamma\left(n+\frac{1}{2}\right)\,,\,\int_{\mathbb{R}}s^{2n+1}e^{-s^2}\,
ds=0\,,\,n\in\{0, 1, 2,...\} \ ,
\end{equation}
and
\begin{equation}
\Gamma\left(n+\frac{1}{2}\right)=\frac{1 \cdot 3  \cdot \cdot
\cdot (2 n-1) }{2^n }\pi^{1/2} \,\,,\,\,n=1, 2, ... \ ,
\end{equation}
 we see that
\begin{equation}
 \int_{-\infty}^\infty\Phi(s)\, M_2(s\sqrt{2}; x)   \,ds  =0 \ ,
\end{equation}
so,
\begin{equation}\label{gb}
\langle  e^{itT} \rangle=\frac{1}{\pi }\int_{-\infty}^\infty e^{i
t x} \, {\rm Im }\,\int_{-\infty}^\infty\Phi(s) \left(
c_1(x)\,M_1(s\sqrt{2}; x)\right)\,ds \,dx   \ .
\end{equation}
To compute $c_1(x)$, letting $t=0$ in (\ref{gb}) we obtain
\begin{align}\label{gb2}
1&=\frac{1}{\pi }\int_{-\infty}^\infty  \, {\rm Im }\left(
c_1(x)\right)\,\int_{-\infty}^\infty\Phi(s) \,M_1(s\sqrt{2};
x)\,ds \,dx \\
&= \frac{1}{\pi^{5/4} } \int_{-\infty}^\infty  \, {\rm Im }\left(
c_1(x)\right)\,\int_{-\infty}^\infty  e^{-s^2}   \sum_{n=0}^\infty
\frac{ \left(\frac{1}{4}-\frac{x}{2} \right)_n }{
\left(\frac{1}{2}\right)_n\, n!} \,ds \,dx \notag \\
&=\frac{1}{\pi^{5/4} } \int_{-\infty}^\infty  \, {\rm Im }\left(
c_1(x)\right)\,  \sum_{n=0}^\infty \frac{
\left(\frac{1}{4}-\frac{x}{2} \right)_n }{
\left(\frac{1}{2}\right)_n\, n!} \,\Gamma\left(n+\frac{1}{2}\right) \,dx \notag \\
&=\frac{1}{\pi^{3/4} } \int_{-\infty}^\infty  \, {\rm Im }\left(
c_1(x)\right)\,  \sum_{n=0}^\infty \frac{
\left(\frac{1}{4}-\frac{x}{2} \right)_n }{
 n!} \,dx \notag  \  .
\end{align}

As pointed out in the proof of  Theorem 6 of \cite{BFspec2}, the
partial sums of the series

\begin{equation}
 \sum_{n=0}^\infty \frac{
\left(x \right)_n }{n!} \  ,
\end{equation}
are
\begin{equation}
s_k=\sum_{n=0}^k \frac{ \left(x \right)_n }{n!}=\frac{(1+k)
\Gamma(1+x+k)}{ \Gamma (x+1) \Gamma (2+k)} \ .
\end{equation}
Thus,
\begin{equation}
\sum_{n=0}^\infty \frac{ \left(x \right)_n
}{n!}=\lim_{k\to\infty}s_k=\left\{
\begin{array}{llr}
  0\   ,&  x< 0   \\
& \\
1\ , & x=0\\
& \\
 \infty \   ,&  x> 0
 \end{array}
 \right.
\  .
\end{equation}
Thus, in order for (\ref{gb2}) to hold, we must interpret $c_1(x)$
in the distribution sense, as
\begin{equation}
{\rm Im }\left( c_1(x)\right)=\pi^{3/4} \delta_{1/2}\left(x
\right)\  ,
\end{equation}
in which case, (\ref{gb}) implies that
\begin{equation}\label{gb3}
\langle  e^{itT} \rangle=\frac{1}{\pi^{3/4} }\int_{-\infty}^\infty
e^{i t x} \pi^{3/4}  \delta_{1/2}\left(x\right) \sum_{n=0}^\infty
\frac{ \left(\frac{1}{4}-\frac{x}{2} \right)_n }{n!}  \,dx =
e^{\frac{i t}{2}}  \ ,
\end{equation}
which implies that the probability distribution of
 $\frac{1}{2}\left(X^2+P^2\right)$ is degenerate.
\end{proof}

\bibliographystyle{amsplain}

\end{document}